\tikzset{node distance=2cm, auto}
\newtheorem{theorem}{Theorem}
\newtheorem{proposition}{Proposition}
\newtheorem{lemma}{Lemma}
\newtheorem{remark}{Remark}
\newtheorem{corollary}{Corollary}
\renewcommand{\H}{\mathcal{H}}
\newcommand{\RR}{\mathbb{R}}
\newcommand{\CC}{\mathbb{C}}
\newcommand{\ZZ}{\mathbb{Z}}
\newcommand{\pa}{\partial}
\newcommand{\be}{\begin{equation}}
\newcommand{\ee}{\end{equation}}
\newcommand{\x}{\times}
\title{Integrability of Limit Shapes of the Six Vertex Model}
\author{Nicolai Reshetikhin}
\address{N.R.: Department of Mathematics, University of California, Berkeley,
CA 94720, USA \\  \& KdV Institute for Mathematics, University of Amsterdam, 1098 XH Amsterdam, The Netherlands  \\ \& ITMO University. Saint Petersburg 197101, Russia.}
\email{reshetik@math.berkeley.edu}
\author{Ananth Sridhar}
\address{A.S.: Department of Physics, University of California, Berkeley,
CA 94720, USA}
\email{asridhar@berkeley.edu}
\begin{document}

\begin{abstract}
The main result of this paper is the construction of infinitely many conserved quantities (corresponding to commuting transfer-matrices) for the limit shape equation for the 6-vertex model on a cylinder. This suggests that the limit shape equation is an integrable PDE with gradient constraints. At the free fermionic point this equation becomes the complex Burgers equation.
\end{abstract}
\maketitle
\tableofcontents
\section*{Introduction}

For a large class of lattice models in equilibrium statistical mechanics,
in the thermodynamic limit, the system develops a deterministic component on the macroscopic scale, with statistical randomness remaining only at the microscopic scale.
This phenomenon is known as the limit shape phenomenon, for examples
and details see \cite{CKP}\cite{Sh}\cite{KOS}.
These types of limits are also known as hydrodynamical limits \cite{Land}.

In this paper, we study the limit shape equations for the 6-vertex model and dimer models
on a cylinder.  Both these models of statistical mechanics are "integrable" in some sense. The six vertex model is integrable in the sense that the weights of the model
satisfy the Yang-Baxter equation \cite{Ba}, have a number of special properties \cite{FadTakh}, and are closely
related to the representation theory of quantized universal algebra $U_q(\widehat{sl_2})$ \cite{JM}.
Dimer models are integrable in the sense that the partition functions and correlation functions can be
computed as Pfaffians of Kasteleyn matrices \cite{Ka}\cite{CR}.

Limit shapes with specific boundary conditions are well studied by now for dimer models, see for example
\cite{KO}\cite{OR}. For the 6-vertex model, limit shapes are best studied
for domain wall boundary conditions \cite{CoPr}\cite{CPZJ}.

The main result of this paper is the construction of infinitely many conserved quantities (integrals of motion)
for the nonlinear PDEs describing limit shapes in the 6-vertex and dimer models. We conjecture that these
equations describe an infinite dimensional integrable system. The existence of these integrals is closely related
to the commutativity of column-to-column transfer matrices.

The layout of the paper is as follows: In Section \ref{6v-section} we recall the 6-vertex model. The basics on the thermodynamic limit
are given in Section \ref{therm-lim-sect}. The Hamiltonian framework for limit shapes on a cylinder
is developed in Section \ref{Ham-6v-section}, where the commutativity of Hamiltonians is also proven. In
Section \ref{dimer-section}, we focus on the dimer model: the Hamiltonian framework is developed and Hamiltonians are explicitly computed. For dimer models, the limit shape equation is related to the complex Burgers equation. Section \ref{ff-point} contains the analysis of the free fermionic point $\Delta=0$ of the 6-vertex model, where
it can be reformulated as a dimer model. In the Conclusion, we discuss some open problems and conjectures.

A large part of this paper was completed while the authors visited Galileo Galilei Institute for Theoretical Physics.
The authors are grateful to GGI for the hospitality. N.R. also thanks Aarhus University
and the University of Tel Aviv where part of the work has been completed. The main result was announced
in 2014  Blumenthal lectures at University of Tel Aviv University. The work was supported by the
NSF grant DMS-0901431 and by the Chern-Simons endowment.

\section{The 6-vertex model}\label{6v-section}

\subsection{The partition function} \subsubsection{States and Weights}The six vertex model is most generally defined on a graph $ \Gamma = (V,E) $ with the property that each vertex has degree either one or four. We denote by $ \partial V $ the set of vertices of degree one and by $ \text{Int}(V) $ those of degree four, and denote by $ \partial E $ or $ \partial \Gamma $ the edges adjacent to a vertex in $ \partial V $ and by $ \text{Int}(E) = E \; \backslash \; \partial E $.

A state $ S $  of the six vertex model is a subset of edges (drawn on the graph as bold edges) satisfying the ice rule: each internal vertex has an even number of its adjacent edges contained in $ S $. There are six possible local configurations around each internal vertex $ v $, as shown in Figure \ref{fig:icerule}. The ice rule implies that a state of the six vertex model can be seen as a set of paths which do not cross (although they can touch at the $ w_2 $ vertex) and turn at vertices only in a particular way dictated by the ice rule. We assume that all edges of the 6-vertex model are oriented as shown.

\begin{figure}[h]
\begin{tikzpicture}  
\draw[decoration={markings,mark=at position 1 with {\arrow[scale=2]{>}}},
    postaction={decorate}] (-.8,0) -- (.8,0);
\draw[decoration={markings,mark=at position 1 with {\arrow[scale=2]{>}}},
    postaction={decorate}](0,-.8) -- (0,.8);
\node at (0,-1){$w_1$};
\end{tikzpicture} \; \;
\begin{tikzpicture}
\draw[decoration={markings,mark=at position 1 with {\arrow[scale=2]{>}}},
    postaction={decorate}] (-.8,0) -- (.8,0);
\draw[decoration={markings,mark=at position 1 with {\arrow[scale=2]{>}}},
    postaction={decorate}](0,-.8) -- (0,.8);
\draw[ultra thick](-.8,0)--(.76,0);
\draw[ultra thick](0,-.8)--(0,.76);
\node at (0,-1){$w_2$};
\end{tikzpicture} \; \;
\begin{tikzpicture}
\draw[decoration={markings,mark=at position 1 with {\arrow[scale=2]{>}}},
    postaction={decorate}] (-.8,0) -- (.8,0);
\draw[decoration={markings,mark=at position 1 with {\arrow[scale=2]{>}}},
    postaction={decorate}](0,-.8) -- (0,.8);
\draw[ultra thick](-.8,0)--(.76,0);
\node at (0,-1){$w_3$};
\end{tikzpicture} \; \;
\begin{tikzpicture}
\draw[decoration={markings,mark=at position 1 with {\arrow[scale=2]{>}}},
    postaction={decorate}] (-.8,0) -- (.8,0);
\draw[decoration={markings,mark=at position 1 with {\arrow[scale=2]{>}}},
    postaction={decorate}](0,-.8) -- (0,.8);
\draw[ultra thick](0,-.8)--(0,.76);
\node at (0,-1){$w_4$};
\end{tikzpicture} \; \;
\begin{tikzpicture}
\draw[decoration={markings,mark=at position 1 with {\arrow[scale=2]{>}}},
    postaction={decorate}] (-.8,0) -- (.8,0);
\draw[decoration={markings,mark=at position 1 with {\arrow[scale=2]{>}}},
    postaction={decorate}](0,-.8) -- (0,.8);
\draw[ultra thick](0,-.8)--(0,0)--(-.8,0);
\node at (0,-1){$w_5$};
\end{tikzpicture} \; \;
\begin{tikzpicture}
\draw[decoration={markings,mark=at position 1 with {\arrow[scale=2]{>}}},
    postaction={decorate}] (-.8,0) -- (.8,0);
\draw[decoration={markings,mark=at position 1 with {\arrow[scale=2]{>}}},
    postaction={decorate}](0,-.8) -- (0,.8);
\draw[ultra thick](.76,0)--(0,0)--(0,.76);
\node at (0,-1){$w_6$};
\end{tikzpicture}

\caption{Six vertex configurations and their weights.} \label{fig:icerule}
\end{figure}
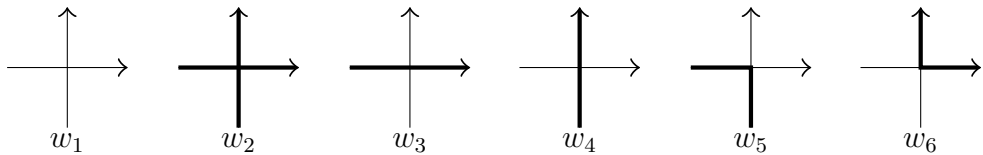

Each configuration of edges around a vertex is assigned a weight, which only depends on the state of edges adjacent to $ v $, and is given in Figure \ref{fig:icerule}.  The Boltzmann weight of the state $ W (S) $ is
\begin{align*}
W (S) = \prod_{ v \in \text{int}(V)} w(v,S)
\end{align*}
where $ w(v,S) $ is the weight assigned to vertex $ v $. 

\subsubsection{Boundary Conditions and the Partition Fuction}
The boundary state $ \partial S $ of $ S $ is the restriction of $S$ to boundary edges.
In other words, it is the subset of boundary edges contained in $ S $, ie. boundary edges occupied by paths.
Let $ \eta \subset \pa \Gamma$ be a subset of boundary edges. The partition given fixed boundary condition $ \eta $ is:
\begin{align*}
Z_{\Gamma,\eta}= \sum_{\substack{ \text{states}  \\ \partial S  = \eta}} W(S)
\end{align*}
Here we sum over all states with boundary state $\eta$.

\subsubsection{The Six Vertex Model on Planar Graphs, Cylinders, Torii}\label{sec:graphs}
From here on, we will focus in particular on the six vertex model defined on the following graphs:
\begin{enumerate}[a)]
\item Planar Domain Graph: These are planar graphs embedded in $ \mathbb{R}^2 $ such that each vertex has degree four or degree one, and each bounded face of the graph has precisely four vertices on its boundary.

\item Cylinder: Denote by $ \mathcal{C}_{MN}$ the square lattice graph embedded on the cylinder with $ N $ rows of vertices around the cylinder and $ M $ columns along its length (with edges along the boundary of the cylinder not included in the graph).

\item Torus: Denote by $ \mathcal{T}_{MN}$ the square lattice graph embedded on
the torus with $N$ rows and $M$ columns along the two cycles of the torus.
\end{enumerate}

These are examples of surface graphs. A surface graph is a graph $ \Gamma $ embedded in a compact, oriented surface $ \Sigma $ such that boundary vertices of $ \Gamma $ are embedded in $ \partial \Sigma $, and the complement of $ \Gamma $ in $ \Sigma $ consists of open two cells, called faces. In other words, $ \Gamma \cup \partial \Sigma $ is the 1-skeleton of a two dimensional cell decomposition of $ \Sigma $. 

For the planar and cylindrical graphs, we will say that a face of the surface graph is an inner face if it is bounded by edges of the graph. Otherwise, the face is called boundary face.

\begin{figure}[h]\label{fig:graphs}
\begin{tikzpicture}[scale=1]
\draw[step=.2 cm,very thin] (-1.3,-.7) grid (.3,.5);
\draw[step=.2 cm,very thin] (-.9,-1.1) grid (.3,1.1);
\draw[step=.2 cm,very thin] (.3,-1.1) grid (1.1,.1);
\end{tikzpicture}\; \; 
\begin{tikzpicture}[line join=round, scale = .8]
\draw(2.554,-.208)--(2.75,-.208);
\draw(2.554,.208)--(2.75,.208);
\draw(2.587,-.588)--(2.783,-.588);
\draw(2.587,.588)--(2.783,.588);
\draw(2.648,-.866)--(2.844,-.866);
\draw(2.648,.866)--(2.844,.866);
\draw(-.192,-.208)--(-.388,-.208);
\draw(-.192,.208)--(-.388,.208);
\draw(.196,0)--(.179,.407)--(.131,.743)--(.061,.951)--(-.02,.995)--(-.098,.866)--(-.159,.588)--(-.192,.208)--(-.192,-.208)--(-.159,-.588)--(-.098,-.866)--(-.02,-.995)--(.061,-.951)--(.131,-.743)--(.179,-.407)--(.196,0);
\draw(-.159,-.588)--(-.355,-.588);
\draw(-.159,.588)--(-.355,.588);
\draw(2.725,.995)--(2.921,.995);
\draw(2.725,-.995)--(2.921,-.995);
\filldraw[fill=white](2.725,-.995)--(2.806,-.951)--(2.463,-.951)--(2.382,-.995)--cycle;
\filldraw[fill=white](2.806,.951)--(2.725,.995)--(2.382,.995)--(2.463,.951)--cycle;
\filldraw[fill=white](2.382,-.995)--(2.463,-.951)--(2.12,-.951)--(2.039,-.995)--cycle;
\filldraw[fill=white](2.463,.951)--(2.382,.995)--(2.039,.995)--(2.12,.951)--cycle;
\filldraw[fill=white](2.039,-.995)--(2.12,-.951)--(1.777,-.951)--(1.696,-.995)--cycle;
\filldraw[fill=white](2.12,.951)--(2.039,.995)--(1.696,.995)--(1.777,.951)--cycle;
\draw(-.098,.866)--(-.294,.866);
\draw(-.098,-.866)--(-.294,-.866);
\filldraw[fill=white](1.696,-.995)--(1.777,-.951)--(1.433,-.951)--(1.352,-.995)--cycle;
\filldraw[fill=white](1.777,.951)--(1.696,.995)--(1.352,.995)--(1.433,.951)--cycle;
\filldraw[fill=white](1.352,-.995)--(1.433,-.951)--(1.09,-.951)--(1.009,-.995)--cycle;
\filldraw[fill=white](1.433,.951)--(1.352,.995)--(1.009,.995)--(1.09,.951)--cycle;
\filldraw[fill=white](1.009,-.995)--(1.09,-.951)--(.747,-.951)--(.666,-.995)--cycle;
\filldraw[fill=white](1.09,.951)--(1.009,.995)--(.666,.995)--(.747,.951)--cycle;
\draw(2.806,.951)--(3.002,.951);
\draw(2.806,-.951)--(3.002,-.951);
\filldraw[fill=white](2.806,-.951)--(2.877,-.743)--(2.534,-.743)--(2.463,-.951)--cycle;
\filldraw[fill=white](2.877,.743)--(2.806,.951)--(2.463,.951)--(2.534,.743)--cycle;
\filldraw[fill=white](.666,-.995)--(.747,-.951)--(.404,-.951)--(.323,-.995)--cycle;
\filldraw[fill=white](.747,.951)--(.666,.995)--(.323,.995)--(.404,.951)--cycle;
\filldraw[fill=white](2.534,.743)--(2.463,.951)--(2.12,.951)--(2.19,.743)--cycle;
\filldraw[fill=white](2.463,-.951)--(2.534,-.743)--(2.19,-.743)--(2.12,-.951)--cycle;
\filldraw[fill=white](.323,-.995)--(.404,-.951)--(.061,-.951)--(-.02,-.995)--cycle;
\filldraw[fill=white](.404,.951)--(.323,.995)--(-.02,.995)--(.061,.951)--cycle;
\filldraw[fill=white](2.19,.743)--(2.12,.951)--(1.777,.951)--(1.847,.743)--cycle;
\filldraw[fill=white](2.12,-.951)--(2.19,-.743)--(1.847,-.743)--(1.777,-.951)--cycle;
\draw(-.02,.995)--(-.217,.995);
\draw(-.02,-.995)--(-.217,-.995);
\filldraw[fill=white](1.777,-.951)--(1.847,-.743)--(1.504,-.743)--(1.433,-.951)--cycle;
\filldraw[fill=white](1.847,.743)--(1.777,.951)--(1.433,.951)--(1.504,.743)--cycle;
\filldraw[fill=white](1.433,-.951)--(1.504,-.743)--(1.161,-.743)--(1.09,-.951)--cycle;
\filldraw[fill=white](1.504,.743)--(1.433,.951)--(1.09,.951)--(1.161,.743)--cycle;
\draw(2.877,.743)--(3.073,.743);
\draw(2.877,-.743)--(3.073,-.743);
\filldraw[fill=white](1.09,-.951)--(1.161,-.743)--(.818,-.743)--(.747,-.951)--cycle;
\filldraw[fill=white](1.161,.743)--(1.09,.951)--(.747,.951)--(.818,.743)--cycle;
\filldraw[fill=white](2.925,.407)--(2.877,.743)--(2.534,.743)--(2.582,.407)--cycle;
\filldraw[fill=white](2.877,-.743)--(2.925,-.407)--(2.582,-.407)--(2.534,-.743)--cycle;
\filldraw[fill=white](.747,-.951)--(.818,-.743)--(.474,-.743)--(.404,-.951)--cycle;
\filldraw[fill=white](.818,.743)--(.747,.951)--(.404,.951)--(.474,.743)--cycle;
\filldraw[fill=white](2.582,.407)--(2.534,.743)--(2.19,.743)--(2.238,.407)--cycle;
\filldraw[fill=white](2.534,-.743)--(2.582,-.407)--(2.238,-.407)--(2.19,-.743)--cycle;
\filldraw[fill=white](.404,-.951)--(.474,-.743)--(.131,-.743)--(.061,-.951)--cycle;
\filldraw[fill=white](.474,.743)--(.404,.951)--(.061,.951)--(.131,.743)--cycle;
\filldraw[fill=white](2.238,.407)--(2.19,.743)--(1.847,.743)--(1.895,.407)--cycle;
\filldraw[fill=white](2.19,-.743)--(2.238,-.407)--(1.895,-.407)--(1.847,-.743)--cycle;
\draw(.061,.951)--(-.136,.951);
\draw(.061,-.951)--(-.136,-.951);
\draw(2.925,.407)--(3.121,.407);
\draw(2.925,-.407)--(3.121,-.407);
\filldraw[fill=white](1.895,.407)--(1.847,.743)--(1.504,.743)--(1.552,.407)--cycle;
\filldraw[fill=white](1.847,-.743)--(1.895,-.407)--(1.552,-.407)--(1.504,-.743)--cycle;
\filldraw[fill=white](2.942,0)--(2.925,.407)--(2.582,.407)--(2.599,0)--cycle;
\filldraw[fill=white](2.925,-.407)--(2.942,0)--(2.599,0)--(2.582,-.407)--cycle;
\filldraw[fill=white](1.504,-.743)--(1.552,-.407)--(1.209,-.407)--(1.161,-.743)--cycle;
\filldraw[fill=white](1.552,.407)--(1.504,.743)--(1.161,.743)--(1.209,.407)--cycle;
\draw(2.942,0)--(3.138,0);
\filldraw[fill=white](2.599,0)--(2.582,.407)--(2.238,.407)--(2.255,0)--cycle;
\filldraw[fill=white](2.582,-.407)--(2.599,0)--(2.255,0)--(2.238,-.407)--cycle;
\filldraw[fill=white](1.161,-.743)--(1.209,-.407)--(.866,-.407)--(.818,-.743)--cycle;
\filldraw[fill=white](1.209,.407)--(1.161,.743)--(.818,.743)--(.866,.407)--cycle;
\filldraw[fill=white](2.255,0)--(2.238,.407)--(1.895,.407)--(1.912,0)--cycle;
\filldraw[fill=white](2.238,-.407)--(2.255,0)--(1.912,0)--(1.895,-.407)--cycle;
\filldraw[fill=white](.818,-.743)--(.866,-.407)--(.522,-.407)--(.474,-.743)--cycle;
\filldraw[fill=white](.866,.407)--(.818,.743)--(.474,.743)--(.522,.407)--cycle;
\filldraw[fill=white](1.912,0)--(1.895,.407)--(1.552,.407)--(1.569,0)--cycle;
\filldraw[fill=white](1.895,-.407)--(1.912,0)--(1.569,0)--(1.552,-.407)--cycle;
\filldraw[fill=white](.474,-.743)--(.522,-.407)--(.179,-.407)--(.131,-.743)--cycle;
\filldraw[fill=white](.522,.407)--(.474,.743)--(.131,.743)--(.179,.407)--cycle;
\filldraw[fill=white](1.569,0)--(1.552,.407)--(1.209,.407)--(1.226,0)--cycle;
\filldraw[fill=white](1.552,-.407)--(1.569,0)--(1.226,0)--(1.209,-.407)--cycle;
\draw(.131,.743)--(-.065,.743);
\draw(.131,-.743)--(-.065,-.743);
\filldraw[fill=white](1.209,-.407)--(1.226,0)--(.883,0)--(.866,-.407)--cycle;
\filldraw[fill=white](1.226,0)--(1.209,.407)--(.866,.407)--(.883,0)--cycle;
\filldraw[fill=white](.866,-.407)--(.883,0)--(.539,0)--(.522,-.407)--cycle;
\filldraw[fill=white](.883,0)--(.866,.407)--(.522,.407)--(.539,0)--cycle;
\filldraw[fill=white](.522,-.407)--(.539,0)--(.196,0)--(.179,-.407)--cycle;
\filldraw[fill=white](.539,0)--(.522,.407)--(.179,.407)--(.196,0)--cycle;
\draw(.179,.407)--(-.017,.407);
\draw(.179,-.407)--(-.017,-.407);
\draw(.196,0)--(0,0);
\end{tikzpicture}
\input{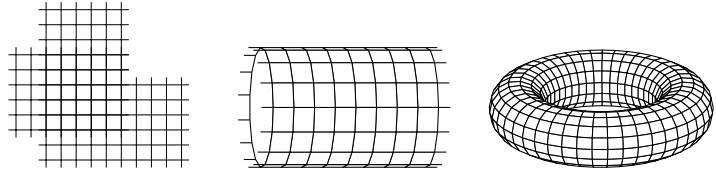}
\caption{A planar graph, cylinder graph, and torus graph.}
\end{figure}

\subsubsection{Magnetic Fields}
One set of weights is particularly important in physical application. It is given in terms of $ (a,b,c) $ and magnetic field $ (H,V) $ as:
\begin{align*}
\begin{array}{cc}
 w_1 = a \; e^{H+V} \; \;  & w_2 = a \; e^{-H - V} \\
 w_3 = b \; e^{H-V}  \; \; & w_4= b \; e^{-H+V} \\
 w_5 = c \; \;   &  w_6 = c
\end{array}
\end{align*}

These weights have the following interpretation: vertices are assigned the weights $ w_1 = w_2 =  a; \; \; w_3 = w_4 = b; \; \; w_5 = w_6 = c $. The vertex weight (depending only on $a,b,c$) of a configuration is locally invariant when $ S $ is replaced by $ \Gamma \backslash S $, i.e. by interchanging occupied and unoccupied edges. This symmetry is broken by assigning each occupied horizontal edge a weight $ e^{H/2} $ and each unoccupied edge $ e^{-H/2} $. Similarly we assign $ e^{V/2} $ and $ e^{-V/2} $ to occupied and unoccupied vertical edges. The total weight of a state is then the product of weights on edges and weights on vertices.

An important characteristic of the 6-vertex model weight is:
\begin{equation}
\Delta = \frac{ w_1 w_2 + w_3 w_4 - w_5 w_6}{2 \sqrt{ w_1 w_2 w_3 w_4}} \ \
= \frac{a^2 + b^2 - c^2}{2 a b}
\end{equation}

\subsection{Commuting transfer matrices}

\subsubsection{Baxter's parametrization of weights} \label{subsec:baxter}
A useful parametrization of weights $ (a,b,c) $ in terms of $ (u, \gamma, r) $ was introducted by Baxter \cite{Ba}:
\begin{enumerate}[a)]
\item When $ \Delta > 1 $:
\begin{enumerate}[i]
\item If $ a > b+ c $, let $ (a, b, c) =  (r \sinh(u + \gamma), r \sinh(u), r \sinh(\gamma) ) $ with $ \gamma > 0 $.
\item If $ b> a+c $, let $ (a,b,c) =  (r \sinh(u - \gamma), r \sinh(u), r \sinh(\gamma) ) $ with $ 0 < \gamma < u $.
\end{enumerate}
For both of these parametrizations $ \Delta = \cosh(\gamma) $.

\item When $ | \Delta | < 1 $:
\begin{enumerate}[i]
\item If $ -1 <  \Delta \leq 0 $, let $ (a, b,  c) = (r \sin(u - \gamma),  r \sin(u),  r \sin(\gamma) ) $  with $ 0 < \gamma < \pi/2 $ and $ \gamma < u < \pi/2 $.
\item If $ 0 <  \Delta \leq 1 $, let $ (a, b, c) = (r \sin(\gamma - u), r \sin(u), r \sin(\gamma) ) $ with $ 0 < \gamma < \pi/2 $ and $ 0< u < \gamma $.
\end{enumerate}
For these parametrizations $ \Delta = \pm \cos( \gamma) $.

\item When $ \Delta < -1 $: let $ (a,b,c) = ( r \sinh(\gamma - u ) , r \sinh(u), r \sinh( \gamma )) $ with  $ 0 < u < \gamma $. \\
In this case $ \Delta = - \cosh( \gamma) $.
\end{enumerate}
We will refer to the variable $u$ as the spectral parameter.

\subsubsection{The $ R$-matrix and the Yang-Baxter equation}It is convenient to arrange the weights of the 6-vertex model into a
$ 4 \times 4 $ matrix as follows:
\begin{align} \label{eq:rmatrix}
R =
\begin{pmatrix}
a \; e^{H+V} & 0 & 0 & 0 \\
0 & b \; e^{H-V} & c  & 0 \\
0 & c & b\; e^{V-H} & 0 \\
0 & 0  & 0 & a \; e^{-H-V}
\end{pmatrix}
\end{align}
Let $ e_1 $ and $ e_2 $ be the standard basis of $\mathbb{C}^2 $:
\begin{align*}
  e_1 = \begin{pmatrix} 1 \\ 0 \end{pmatrix} \hspace{15pt} e_2 = \begin{pmatrix} 0 \\ 1 \end{pmatrix}
\end{align*}
The state of each edge corresponds to a vector in $ \mathbb{C}^2 $, with an empty edge corresponding to $ e_1 $, and an occupied edge to $ e_2 $.  In the tensor product basis, $e_1\otimes e_1, e_1\otimes e_2,  e_2\otimes e_1,  e_2\otimes e_2 $, the R-matrix (\ref{eq:rmatrix}) represents a linear operator that maps a state on the West and South edges adjacent to a vertex to a state on the North and East edges, scaled by the Boltzmann weight.

Denote by $ R(u,H,V) $ the $ R $-matrix (\ref{eq:rmatrix}) with weights given by the parametrization (\ref{subsec:baxter}). Let $ R(u) = R(u,0,0) $. Then the $ R$-matrices satisfy the Yang-Baxter equation \cite{Ba}:
\begin{align} \label{eq:ybe}
R_{12}(u) R_{13}(u+v) R_{23}(v) = R_{23}(v) R_{13}(u+v) R_{12}(u)
\end{align}
in $ \mathbb{C}^2 \otimes \mathbb{C}^2 \otimes \mathbb{C}^2 $. Here the parameter $\Delta$
is the same for all three matrices.

The $ R $-matrix with magnetic field can be expressed as:
\begin{align} \label{eq:brmatrix}
R(u,H,V) = (D^H \otimes D^V) \; R(u) \; (D^H \otimes D^V)
\end{align}
where
\begin{align*}
D^H = \begin{pmatrix}
e^{H/2} & 0 \\
0 & e^{-H/2}
\end{pmatrix}
\end{align*}

The ice-rule implies that for any diagonal matrix $ D $:
\begin{align}  \label{eq:diagrmatrix}
 (D \otimes D) \; R(u, H,V) = R(u,H,V) \; (D \otimes D)
\end{align}
Together (\ref{eq:diagrmatrix}) and (\ref{eq:brmatrix}) imply that $R(u, H, V)$ satisfies:
\begin{equation}  \label{eq:ybe2}
R_{12}(u) R_{13}(u+v,0,V) R_{23}(v,0,V) =  R_{23}(v,0,V) R_{13}(u+v,0,V) R_{12}(u)
\end{equation}

\subsubsection{Commutativity of column-to-column transfer matrix}
Let $V$ be the space of states on $N $ horizontal edges in a slice between two columns of the square lattice:
\begin{align*}
V &=   \mathbb{C}^2 \otimes  \cdots \otimes \mathbb{C}^2 = {\CC^2}^{\otimes N}
\end{align*}
The quantum monodromy matrix is the linear operator $ T_a:\mathbb{C}^2 \otimes V \rightarrow \mathbb{C}^2 \otimes V $ defined by:
\begin{align} \label{eq:mon}
T_a(u, H, V)= R_{1a}(u,H,V) R_{2a}(u,H,V) \cdots R_{Na}(u,H,V)
\end{align}
Here we have enumerated factors in the tensor product $ \mathbb{C}^2 \otimes V = \mathbb{C}^2 \otimes \mathbb{C}^2 \otimes \cdots \otimes \mathbb{C}^2 $ by $ ( a,1, \cdots, N ) $. The operator $ R_{ai} $ acts on $ \mathbb{C}^2 \otimes V $ as $ R $ on the product $ \mathbb{C}^2 \otimes \mathbb{C}^2 $ of factors enumerated by $a $  and $ i $.
The matrix elements of the quantum monodromy matrix are partition functions (with different boundary conditions) for a planar domain with one column of $ N$ vertices.

The transfer matrix $ t(u,H,V): V \rightarrow V $  is defined as the partial trace over the first factor in $ \mathbb{C}^2 \otimes V $:
\begin{align} \label{eq:transfermatrix}
t(u, H, V) = ( \text{Tr}\otimes \text{Id} ) \; T_a(u, H, V) 
\end{align}

We identify a boundary state $\eta$ on one end of the cylinder $ \mathcal{C}_{MN}$ with a vector
\begin{align*}
\psi_\eta = e_{\eta(1) } \otimes e_{\eta(2)} \cdots \otimes e_{\eta(N) } \in V
\end{align*}
where $ i = 1, \cdots, N $ enumerate boundary edges, and $ \eta(i) = 1,2 $;  $ e_1 $ corresponds to an empty edge, and $ e_2 $ to an edge occupied by a path. The set of vectors $\psi_\eta$ taken over all $ \eta $ form a basis in $V$.

It is clear from the definition of the transfer matrix that
\begin{equation}\label{ctm1}
t(u,H,V)=(D_H\otimes \dots \otimes D_H)t(u, 0, V)(D_H\otimes \dots \otimes D_H)
\end{equation}
The 6-vertex rule implies that
\begin{equation}\label{ctm2}
(D_A\otimes \dots \otimes D_A)t(u,H, V)= t(u,H, V)(D_A\otimes \dots \otimes D_A)
\end{equation}
for any $A$.

The Yang-Baxter equation (\ref{eq:ybe2}) implies that
\[
T_a(u,0, V)T_b(u,0,V)R_{ab}(u)=R_{ab}(u)T_b(u,0,V)T_a(u,0, V)
\]
Together with (\ref{ctm1}) and (\ref{ctm2}) this implies the commutativity of transfer matrices
\[
[t(u, H, V), t(v, H,V)]=0
\]

The partition function for the six vertex model on the cylinder with fixed boundary condition is a matrix element of the product of transfer matrices:
\begin{align*}
Z_{\mathcal{C}_{MN}, \eta_1, \eta_2 } = ( \psi_{\eta_1}, t(u,H,V)^M\psi_{\eta_2})
\end{align*}
Here $(x,y)$ is the natural scalar product on ${\CC^2}^{\otimes N}$. Let $m(\eta)$ be the difference
between the number of horizontal edges occupies by paths and the number of non-occupied edges.
The 6-vertex rule implies that $m(\eta_1)=m(\eta_2)$ and
\begin{equation}\label{cyl-mag-field}
Z_{\mathcal{C}_{MN}, \eta_1, \eta_2 } = ( \psi_{\eta_1}, t(u,0,V)^M\psi_{\eta_2})\exp \big(H m(\eta_1) / 2 \big)
\end{equation}
The partition function of the six vertex model on the torus $ \mathcal{T}_{MN} $ is:
\begin{align*}
Z_{\mathcal{T}_{MN}} = \text{Tr} \; t^{M} =\sum_\eta Z_{\mathcal{C}_{MN}, \eta, \eta }
\end{align*}

\subsection{Height Functions}
We say a map $\theta$ from faces of $\Gamma$ to the half integers is a {\it height function} if at every vertex it satisfies the ice rule for height functions (see Figure \ref{fig:iceruleheights}). Such functions are locally in bijection with $6$-vertex configurations.

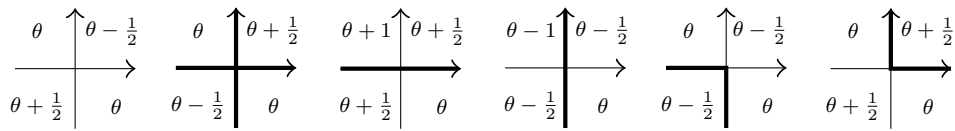
\begin{figure}[h] \label{fig:iceruleheights}
\begin{tikzpicture}
\draw[decoration={markings,mark=at position 1 with {\arrow[scale=2]{>}}},
    postaction={decorate}] (-.8,0) -- (.8,0);
\draw[decoration={markings,mark=at position 1 with {\arrow[scale=2]{>}}},
    postaction={decorate}](0,-.8) -- (0,.8);
\node at (.5,-.5){\tiny{ $\theta$ }};
\node at (.5,.5){\tiny{$\theta - \frac{1}{2} $  }};
\node at (-.5,-.5){ \tiny{$\theta + \frac{1}{2}$} };
\node at (-.5,.5){ \tiny{$\theta$} };
\end{tikzpicture} 
\begin{tikzpicture}
\draw[decoration={markings,mark=at position 1 with {\arrow[scale=2]{>}}},
    postaction={decorate}] (-.8,0) -- (.8,0);
\draw[decoration={markings,mark=at position 1 with {\arrow[scale=2]{>}}},
    postaction={decorate}](0,-.8) -- (0,.8);
\draw[ultra thick](-.8,0)--(.76,0);
\draw[ultra thick](0,-.8)--(0,.76);
\node at (.5,-.5){ \tiny{$\theta$}};
\node at (.5,.5){\tiny{$\theta + \frac{1}{2} $  }};
\node at (-.5,-.5){ \tiny{$\theta - \frac{1}{2}$ }};
\node at (-.5,.5){\tiny{ $\theta$ } };
\end{tikzpicture} 
\begin{tikzpicture}
\draw[decoration={markings,mark=at position 1 with {\arrow[scale=2]{>}}},
    postaction={decorate}] (-.8,0) -- (.8,0);
\draw[decoration={markings,mark=at position 1 with {\arrow[scale=2]{>}}},
    postaction={decorate}](0,-.8) -- (0,.8);
\draw[ultra thick](-.8,0)--(.76,0);
\node at (.5,-.5){\tiny{ $\theta$ }};
\node at (.5,.5){\tiny{$\theta + \frac{1}{2} $  }};
\node at (-.5,-.5){\tiny{ $\theta + \frac{1}{2}$} };
\node at (-.5,.5){\tiny{ $\theta + 1$ }};
\end{tikzpicture}
\begin{tikzpicture}
\draw[decoration={markings,mark=at position 1 with {\arrow[scale=2]{>}}},
    postaction={decorate}] (-.8,0) -- (.8,0);
\draw[decoration={markings,mark=at position 1 with {\arrow[scale=2]{>}}},
    postaction={decorate}](0,-.8) -- (0,.8);
\draw[ultra thick](0,-.8)--(0,.76);
\node at (.5,-.5){ \tiny{$\theta$ }};
\node at (.5,.5){\tiny{$\theta - \frac{1}{2} $  }};
\node at (-.5,-.5){\tiny{ $\theta - \frac{1}{2}$ }};
\node at (-.5,.5){\tiny{ $\theta - 1$} };
\end{tikzpicture} 
\begin{tikzpicture}
\draw[decoration={markings,mark=at position 1 with {\arrow[scale=2]{>}}},
    postaction={decorate}] (-.8,0) -- (.8,0);
\draw[decoration={markings,mark=at position 1 with {\arrow[scale=2]{>}}},
    postaction={decorate}](0,-.8) -- (0,.8);
\draw[ultra thick](0,-.8)--(0,0)--(-.8,0);
\node at (.5,-.5){\tiny{ $\theta$} };
\node at (.5,.5){\tiny{$\theta - \frac{1}{2} $ } };
\node at (-.5,-.5){ \tiny{$\theta - \frac{1}{2}$ }};
\node at (-.5,.5){ \tiny{$\theta $ }};
\end{tikzpicture} 
\begin{tikzpicture}
\draw[decoration={markings,mark=at position 1 with {\arrow[scale=2]{>}}},
    postaction={decorate}] (-.8,0) -- (.8,0);
\draw[decoration={markings,mark=at position 1 with {\arrow[scale=2]{>}}},
    postaction={decorate}](0,-.8) -- (0,.8);
\draw[ultra thick](.8,0)--(0,0)--(0,.76);
\node at (.5,-.5){ \tiny{$\theta$ }};
\node at (.5,.5){\tiny{$\theta + \frac{1}{2} $}  };
\node at (-.5,-.5){ \tiny{$\theta + \frac{1}{2}$} };
\node at (-.5,.5){ \tiny{$\theta $ }};
\end{tikzpicture}
 
\caption{The ice rule for height functions and the corresponding six vertex states.}\label{fig:iceruleheights}
\end{figure}

For a simply connected domain this local bijection gives a global bijection between
ice configurations and height functions modulo constants (or, equivalently,
height functions with a fixed value at a chosen reference face).

\begin{remark} A more familiar height function $ \xi $ is defined so that its level curves are the paths of the six vertex state, see for example \cite{AR}. The height function defined here differs by a linear function $\xi=\theta+x/2+y/2$.
\end{remark}

If the surface graph is not simply connected, it is clear that height functions may not exist globally.
For the cylinder, a height function can instead be regarded as a multivalued function (a section of a line bundle) with the non-trivial monodromy across the cylinder. The ice rule guarantees that the monodromy does not depend on the choice of a cycle as long as it is homotopic to a simple curve across the cylinder.

Similarly, for the torus, if we choose a basis of cycles, say $a$ and $b$ , a height function gains monodromy along each of the cycles, which, because of the ice rule depends only on the homotopy class of a cycle
on $\Gamma$.

To make all height functions globally defined as functions on faces, we will choose branch cuts for cylinders and tori.
Denote by $H_{\widetilde{C}_{MN}}(\eta_1 , \eta_2 )$ the space of all height functions on the cylinder $\widetilde{C}_{MN}$ with
a branch cut along one of the grid lines and with boundary height functions corresponding to $\eta_1$ and $\eta_2$.

Similarly we will denote by $H_{\widetilde{T}_{MN}}(\Delta_x \theta, \Delta_y \theta)$ the space of all height functions on the torus $\widetilde{T}_{MN}$ with branch cuts along $a$ and $b$, with the monodromy $\Delta_y \theta$ across $b$ and  monodromy $\Delta_x \theta$ across $a$.

Because states are defined also on boundary edges, the height function is defined
not only on inner faces of the grid, but also on the outer faces. The boundary height function is defined
by a boundary configuration, up to a constant on each connected component of the boundary.

\begin{remark} One can regard a six vertex state as a $1$-cycle on the $2$-dimensional
complex of the surface graph in question. In the case when the complex is connected and simply connected, a cycle is the boundary of a $2$-cycle, which is the height function. But because
our interest is mostly the case of a cylinder, we will use somewhat cumbersome notion
of a height function on a surface with branch cuts.
\end{remark}

\section{The thermodynamic limit and limit shapes}\label{therm-lim-sect}
In this section, we consider approximating a planar domain, cylinder, or torus by the square lattice graphs, and study the six vertex model on these graphs as the mesh $ \epsilon $ of the approximation approaches zero. In this limit, there is the emergence of the limit shape phenomena: the rescaled height functions has vanishing variance and converges to a surface called the limit shape that can be found by solving a certain boundary value problem.

\subsection{Embedded Surfaces Graphs}\label{sec:approx}
Here we define families of graphs that in the limit of zero mesh fill the corresponding surfaces.

\begin{enumerate}[a)]
\item Planar domain: Let $D$ be a simply connected domain in $ \mathbb{R}^2 $ with Euclidean metric
and $\phi_\epsilon: \ZZ^2\to \RR^2$ the embedding given by $ (m,n) \rightarrow (\epsilon m, \epsilon n) $. For generic domain $ D $, the intersection  $ D$ with $\phi_\epsilon(\ZZ^2)$ defines a planar domain graph, which we call $ D_\epsilon $. (See Figure \ref{fig:domain2}).

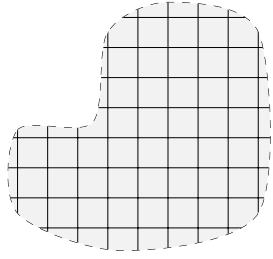
\begin{figure}[h] \label{fig:domain2}
\begin{tikzpicture}[scale=2]
\draw [dashed] plot [smooth cycle] coordinates {(-1,-.5) (-1,.05) (-.5,.1) (-.4,.7)(0,.9)(.6,.7)(.6,-.5)(-.3,-.75)};
\fill [color= black!5!white] plot [smooth cycle] coordinates {(-1,-.5) (-1,.05) (-.5,.1) (-.4,.7)(0,.9)(.6,.7)(.6,-.5)(-.3,-.75)};

\clip  plot [smooth cycle] coordinates {(-1,-.5) (-1,.05) (-.5,.1) (-.4,.7)(0,.9)(.6,.7)(.6,-.5)(-.3,-.75)};

\draw[step=.2 cm,very thin] (-1.5,-1) grid (1.5,1);
\end{tikzpicture}

\caption{Approximating a domain $ D $ with a graph $ D_\epsilon $.} \label{fig:domain2}
\end{figure}

\item Cylinder: Let $C_{TL} =   [0,T] \times \RR \; / \;   \{ (x,y) \sim (x,y+L) \}  $  be the cylinder of length $T$,  circumference $L$ and with flat Euclidean metric. We take the branch cut along $ y = 0$. Denote by $C_{TL}^{(\epsilon)}$ the intersection of $\phi_\epsilon(\ZZ^2)$ with the fundamental domain $  [0,T] \times \RR $.

\item  Torus: Let $ T_{TL} = \RR \times \RR \; / \; \{ (x,y) \sim (x,y+L) \sim (x+T,y) \} $ be the flat torus with constant Euclidean metric. We take the branch cut along $ x =0 $ and $ y = 0 $. Denote by $T_{TL}^{(\epsilon)}$ the intersection of $\phi_\epsilon(\ZZ^2)$  with the fundamental domain $ [0,T] \times [0,L] $.
\end{enumerate}

\begin{figure}[h] \label{fig:toruscylinderbranch}
\begin{tikzpicture}[baseline,scale = .3]
\fill (-4,0) circle (0pt);
\fill (4,0) circle (0pt) ;
\fill (0,-5) circle (0pt) ;

\draw [thick, domain = 0:360,smooth] plot ({cos(\x)*(3 + 1/sqrt(1 + 3.35069*sin(\x)^2))}, {1.6064*sin(\x) / sqrt(1 + 3.35069*sin(\x)^2 ) + .479426*sin(\x)*( 3 + 1/sqrt( 1 + 3.3069*sin(\x)^2 ) ) } );

\draw [thick, domain = 17:163,smooth] plot ({-cos(\x)*(3 - 1/sqrt(1 + 3.35069*sin(\x)^2))}, {1.6064*sin(\x) / sqrt(1 + 3.35069*sin(\x)^2 ) - .479426*sin(\x)*( 3 - 1/sqrt( 1 + 3.3069*sin(\x)^2 ) ) } );

\draw [thick, domain = 35:145, smooth] plot ({cos(\x)*(3 - 1/sqrt(1 + 3.35069*sin(\x)^2))}, {-1.6064*sin(\x) / sqrt(1 + 3.35069*sin(\x)^2 ) + .479426*sin(\x)*( 3 - 1/sqrt( 1 + 3.3069*sin(\x)^2 ) ) } );

\draw [domain = 180:360, smooth] plot ({ 4*cos(\x) }, {1.9177*sin(\x) } );

\draw [dashed, domain = 0:180, smooth] plot ({ 4*cos(\x) }, {1.9177*sin(\x) } );

\draw [decoration={markings,mark=at position .5 with {\arrow[scale=2]{>>}}},
    postaction={decorate}] (-.01,-1.9177)--(.01,-1.9177);

\draw [decoration={markings,mark=at position .5 with {\arrow[scale=2]{>}}},
    postaction={decorate}] (-1.93767, -1.38379) -- (-1.93692, -1.37897);

\draw [domain = -47:122, smooth] plot ({-0.490261*(3 + cos(\x)) }, {-0.417856*(3 + cos(\x)) + 0.877583*sin(\x) } );

\draw [dashed, domain = 122:360, smooth] plot ({-0.490261*(3 + cos(\x)) }, {-0.417856*(3 + cos(\x)) + 0.877583*sin(\x) } );
\end{tikzpicture} \;
\begin{tikzpicture}[baseline, scale = .55]
\draw (-2,-1.5) rectangle (2,1.5);
\draw [decoration={markings,mark=at position .5 with {\arrow[scale=2]{>>}}},
    postaction={decorate}] (-2,-1.5)--(2,-1.5);
\draw [decoration={markings,mark=at position .5 with {\arrow[scale=2]{>>}}},
    postaction={decorate}] (-2,1.5)--(2,1.5);

\draw [decoration={markings,mark=at position .5 with {\arrow[scale=2]{>}}},
    postaction={decorate}] (-2,-1.5)--(-2,1.5);
\draw [decoration={markings,mark=at position .5 with {\arrow[scale=2]{>}}},
    postaction={decorate}] (2,-1.5)--(2,1.5);

\node at  (-2.3,-1.8) {\footnotesize{(0,0)}};
\node at  (2.3,-1.8) { \footnotesize{(T,0)} };
\node at  (-2.3,1.8) { \footnotesize{ (0,L)} };
\node at  (2.3,1.8) { \footnotesize{ (T,L)} };
\end{tikzpicture} \;\; \; \; \; \;
\begin{tikzpicture}[baseline,scale = .3]
\fill (-4,0) circle (0pt);
\fill (4,0) circle (0pt) ;
\draw [domain = 0:360, smooth] plot ({3+ .6*cos(\x)}, { 2*sin(\x) } );
\draw [domain = 90:270, smooth] plot ({-3 + .6*cos(\x)}, { 2*sin(\x) } );
\draw (3,2)--(-3,2);
\draw (3,-2)--(-3,-2);
\draw [decoration={markings,mark=at position .5 with {\arrow[scale=2]{>}}},
    postaction={decorate}] (-3.5732,0.59104)-- (2.4268,0.59104);
\end{tikzpicture} \;
\begin{tikzpicture}[baseline, scale = .55]
\draw (-2,-1.5) rectangle (2,1.5);
\draw [decoration={markings,mark=at position .5 with {\arrow[scale=2]{>}}},
    postaction={decorate}] (-2,-1.5)--(2,-1.5);
\draw [decoration={markings,mark=at position .5 with {\arrow[scale=2]{>}}},
    postaction={decorate}] (-2,1.5)--(2,1.5);

\draw (-2,-1.5)--(-2,1.5);
\draw (2,-1.5)--(2,1.5);

\node at  (-2.3,-1.8) {\footnotesize{(0,0)}};
\node at  (2.3,-1.8) { \footnotesize{(T,0)} };
\node at  (-2.3,1.8) { \footnotesize{ (0,L)} };
\node at  (2.3,1.8) { \footnotesize{ (T,L)} };
\end{tikzpicture}
 
\caption{Branch cuts on the torus and cylinder.}
\end{figure}
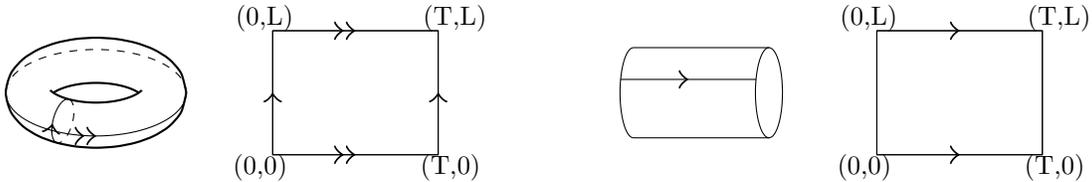

\subsection{Normalized Height Function and Boundary Height Function}
For a domain $D$ a {\it normalized height function} is defined as a piece-wise constant
function, with constant value $ h^{\text{norm}} = \epsilon \theta(f) $ on each face $ f \in \Gamma \subset D $. As a consequence of the ice rule, a normalized height function satisfies
\begin{align*}
 -\epsilon/2 \leq h^{\text{norm}}(x +\epsilon,y) - h^{\text{norm}}(x  ,y) \leq \epsilon/2\\
 -\epsilon/2 \leq h^{\text{norm}}(x, y+\epsilon) - h^{\text{norm}}(x,y) \leq \epsilon/2
\end{align*}

A normalized {\it boundary height function} is a piece-wise constant function on $\pa D$. Its value on
the segment of $\pa D$ which intersects an outer face $f$ of $D_\epsilon$ is $\epsilon \theta(f)$. Normalized boundary height function changes by $\pm \epsilon$ or does not change at each intersection point of $\pa D$ and $\phi_\epsilon(\ZZ^2)$.
The sign is determined by the orientation of lines in the square grid and by the orientation of $\RR^2$.

Similarly we define normalized height functions for a cylinder and a torus with branch cuts.

\subsection{The Thermodynamic Limit on a Torus}Let $ \mathbb{T} = T_{11} $ be the unit torus, and $ \{ M^{(n)}\}_{(n = 1 \cdots \infty)} $ be a sequence of embedded surface graphs with mesh $\epsilon_n$ as defined above, approximating $ M $  as $ \epsilon_n \rightarrow 0 $.

It is expected\footnote{The evidence for this is overwhelming, though
it is hard to point a specific reference where it has been proven with
mathematical level of rigor.} that as $n\to \infty$ there exist the limiting
density of the free energy
\begin{equation}\label{free-en-t}
f_{\mathbb{T}} = \lim_{n\rightarrow \infty} \epsilon_n^2 \log(Z_{M^{(n)}}) ,
\end{equation}

We will call $f_{\mathbb{T}}(H,V)$ the {\it toroidal free energy}. Its Legendre transform $\sigma(s,t)$ is the free energy
for torus with zero magnetic fields conditioned to having the average magnetization $s$ in the horizontal direction and $t$ in the vertical direction:
\begin{align}
\sigma(s,t) = \max_{H,V} (s H + t V - f_\mathbb{T}(H,V)), \ \ s,t\in [-\frac{1}{2},\frac{1}{2}]
\end{align}

Properties of the toroidal free energy $ f_{\mathbb{T}}(H,V) $ and of $ \sigma(s,t) $, are summarized in \cite{PR}
(see also references therein). We briefly recall here some of them:

First, it is clear that $ \sigma $ has the following symmetries:
\begin{align*}
\sigma(s,t) = \sigma(t,s) = \sigma(- s, - t)
\end{align*}

From which we also have $ \frac{\partial \sigma}{\partial s}(s,t) =  \frac{\partial \sigma}{\partial t}(t,s) $.

In addition, $ \sigma $ has the following analytic structure (see Figure (\ref{fig:sigmafigures})).
\begin{enumerate}
\item When $ | \Delta | \leq 1 $, $ \sigma(s,t) $ is strictly convex and smooth for all $ -1/2 < s, t < 1/2 $, with corner singularities at the boundary.
\item When $ \Delta < -1 $, $ \sigma $ is convex with corner singularities near the boundary, and an additional conical singularity at $ (s,t)=(0,0)$. It is smooth and strictly convex away from $(0,0)$ and the boundary.

\item When $ \Delta > 1 $, $ \sigma $ is convex with corner singularities at the boundary and in addition corner singularity along $ s = t, -t_0<t<t_0 $ when $ a > b $, and along $ s = -t, -t_0<t<t_0 $ when $ a < b $,
    for certain $t_0$ (which is a function of weights). The function $\sigma$ is smooth and strictly convex away from these singularities.
\end{enumerate}
\begin{figure}[h]
\includegraphics[scale=.25]{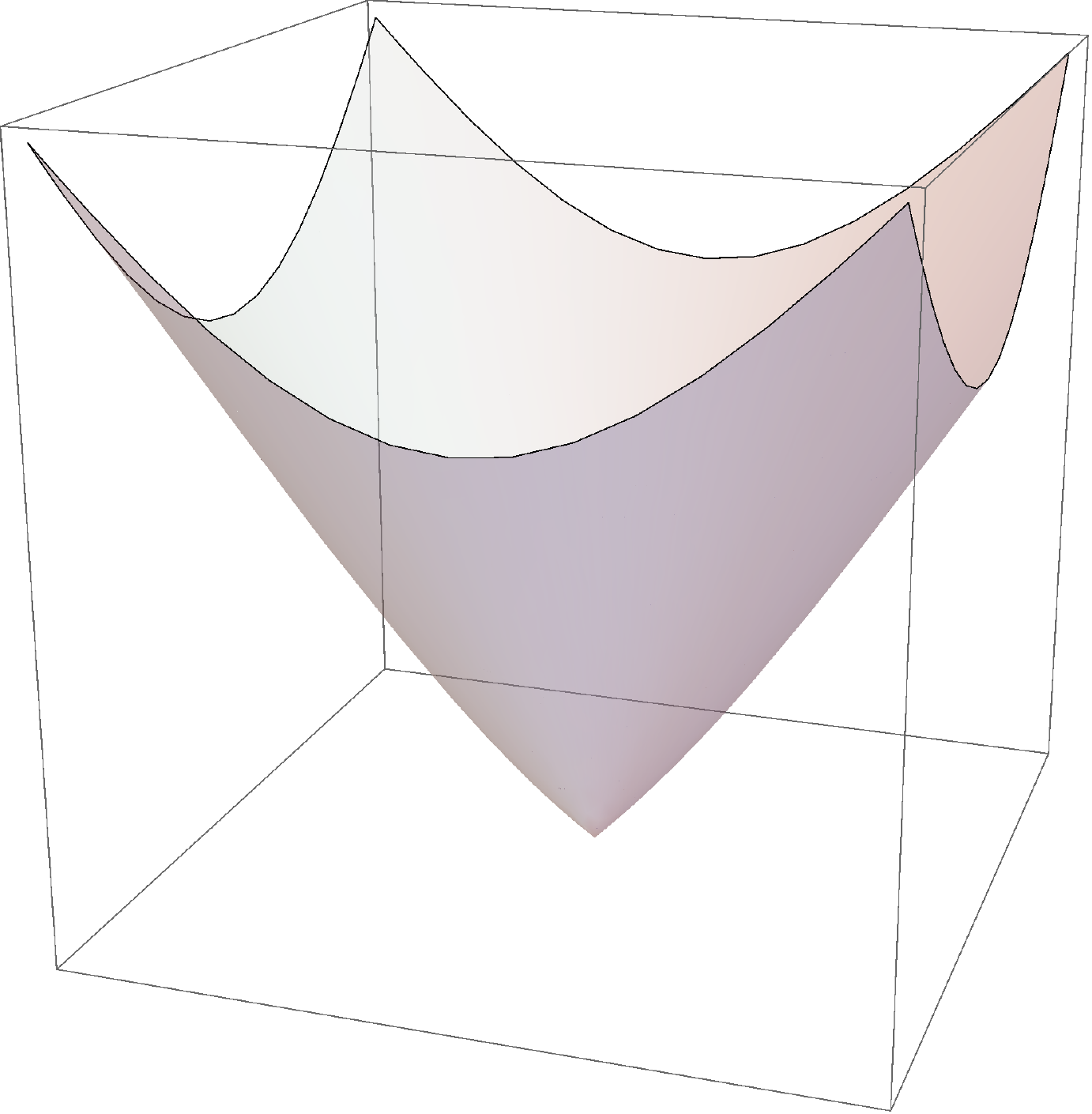} \;
\includegraphics[scale=.25]{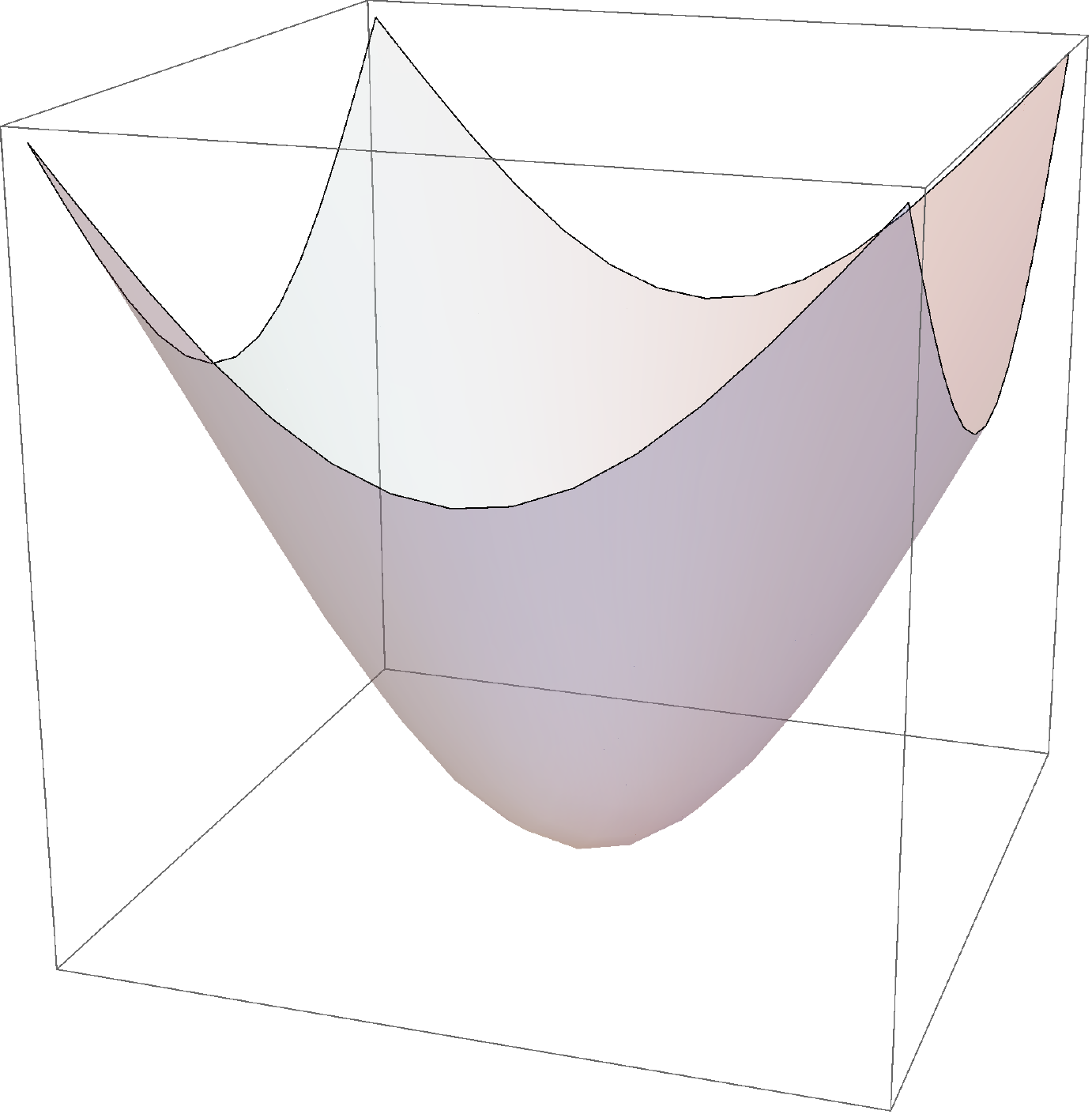} \;
\includegraphics[scale=.25]{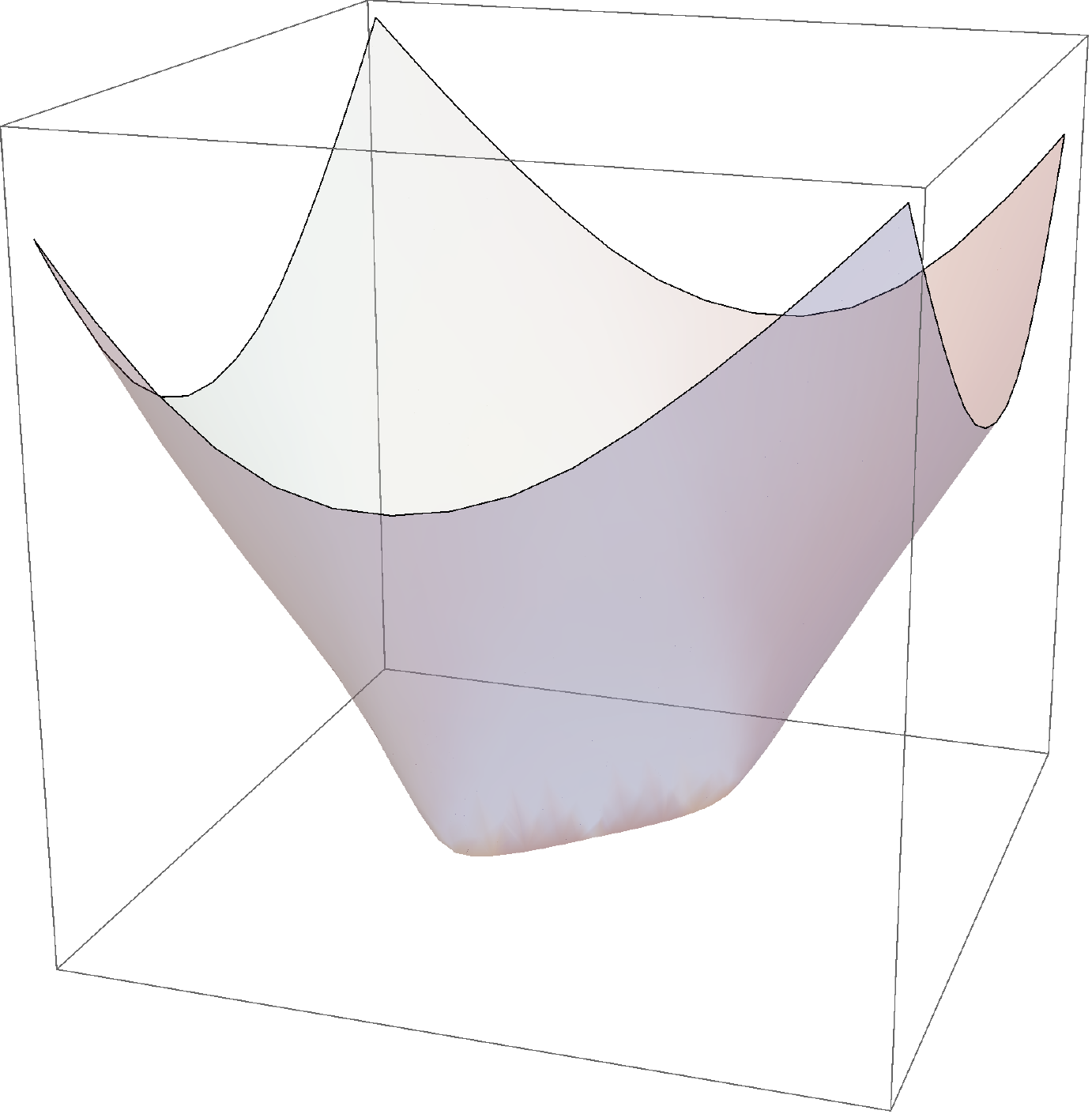}
\caption{Sketch of $ \sigma $ for $ \Delta < -1 $, $ | \Delta | < 1 $, and $ \Delta > 1 $.}\label{fig:sigmafigures}
\end{figure}
An important fact regarding the convexity of $ \sigma $ is the following proposition based on \cite{NK} :
\begin{proposition}\label{prop:hesssigma}
The Hessian $ \det \left( \partial_i \partial_j \sigma \right) $ only depends on $s, t$ and $ \Delta $.
It does not depend on spectral parameter.
\end{proposition}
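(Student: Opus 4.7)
The plan is to combine Legendre duality with a sector decomposition of the transfer matrix, reducing the statement to an identity for a ratio of Bethe-ansatz derivatives. Since $\sigma$ is the Legendre transform of $f_{\mathbb T}$, at conjugate points we have $\mr{Hess}_{s,t}(\sigma) = (\mr{Hess}_{H,V}(f_{\mathbb T}))^{-1}$, so $\det \mr{Hess}(\sigma) = 1/\det \mr{Hess}_{H,V}(f_{\mathbb T})$. The relations (\ref{ctm1})--(\ref{ctm2}) together with ice-rule conservation of horizontal magnetization $\hat M$ imply that on the sector $\hat M = sN$, the transfer matrix factorizes as $t(u,H,V) = e^{HsN}\, t(u,0,V)$. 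Hence in the thermodynamic limit
\[
f_{\mathbb T}(u,H,V) \;=\; \max_{s}\bigl(Hs + \phi(u,V,s)\bigr), \qquad \phi(u,V,s) := \lim_{N\to\infty}\tfrac{1}{N}\log \lambda_{\max}(u,0,V;\,sN),
\]
with $H = -\partial_s\phi$ and $t = \partial_V\phi$ at the maximizer. A short implicit-differentiation/envelope-theorem calculation then yields the clean identity
\[
\det\mr{Hess}_{H,V}(f_{\mathbb T}) \;=\; -\,\frac{\partial_V^2 \phi}{\partial_s^2\phi}\bigg|_{(u,\,V_\ast(s,t,u),\,s)},
\]
where $V_\ast$ is determined implicitly by $\partial_V\phi(u,V_\ast,s) = t$. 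The proposition is equivalent to: this ratio of second derivatives is independent of $u$ once the argument is $(u, V_\ast(s,t,u), s)$.

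Next, the commutativity $[t(u,H,V), t(v,H,V)] = 0$ established above guarantees that the family is simultaneously diagonalizable, so the algebraic Bethe ansatz is available. The ground-state eigenvalue in sector $s$ takes the form
\[
\phi(u,V,s) \;=\; \phi_0(u,\gamma) + \int_C L(u,v;\gamma)\,\rho(v;\,V,s,\gamma)\,dv,
\]
where the Bethe-root density $\rho$ satisfies a Fredholm equation $(I+K_\gamma)\rho = p(\,\cdot\,;V,s,\gamma)$. Both the kernel $K_\gamma$ and the driving term $p$ are $u$-independent (the latter controlled by the Fermi-sea endpoints $\Lambda_\pm(V,s,\gamma)$), while $L(u,v;\gamma)$ is the explicit $u$-dependent kernel from Baxter's eigenvalue formula. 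Thus the entire $u$-dependence of $\phi$ sits in the scalar $\phi_0(u,\gamma)$ and in $L(u,\cdot\,;\gamma)$.

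The last step---and the main technical obstacle---is to verify the cancellation of this $u$-dependent data in the ratio $\partial_s^2\phi/\partial_V^2\phi$. Using the resolvent of $I+K_\gamma$, the derivatives $\partial_s^2\rho$ and $\partial_V^2\rho$ reduce to specific linear combinations of $u$-independent functions $R(v,\Lambda_\pm;\gamma) = (I+K_\gamma)^{-1}(v,\Lambda_\pm)$ and their $\Lambda_\pm$-derivatives, with scalar coefficients depending only on $(V, s, \gamma)$. Substituting $V = V_\ast(s,t,u)$ trades the $u$-dependent $V$ for the conjugate variables $(s,t)$; the assertion is that, once this substitution is made, the $u$-dependent integrals $\int L(u,v;\gamma)R(v,\Lambda_\pm;\gamma)\,dv$ cancel exactly between numerator and denominator. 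This cancellation is the content of the Bethe-ansatz computation in \cite{NK}; morally it expresses that the entropy's Hessian encodes a universal low-energy (Luttinger-liquid-type) stiffness, a function of the phase $\Delta$ and of the state $(s,t)$ only, not of the microscopic spectral parameter $u$.
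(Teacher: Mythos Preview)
The paper does not actually give a proof of this proposition; it simply states that ``the proof is not difficult to derive from \cite{NK}'' and moves on. Your proposal is therefore considerably more detailed than what the paper offers, and the overall strategy you outline---Legendre duality to reduce to $1/\det\mr{Hess}_{H,V}(f_{\mathbb T})$, sector decomposition via (\ref{ctm1})--(\ref{ctm2}) to write $f_{\mathbb T}$ as a one-variable Legendre transform in $s$, and then the Bethe-ansatz representation of the top eigenvalue---is a sound way to unpack what ``derive from \cite{NK}'' would mean. Your envelope-theorem identity $\det\mr{Hess}_{H,V}(f_{\mathbb T}) = -\partial_V^2\phi/\partial_s^2\phi$ is correct and is a clean reduction.

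That said, you should be aware that your final paragraph does not actually close the argument: you correctly isolate the $u$-dependence in $\phi_0$ and in the kernel $L(u,\cdot;\gamma)$, but the claimed ``cancellation'' of the $u$-dependent integrals between numerator and denominator is asserted, not shown. In particular, because you must evaluate at $V = V_\ast(s,t,u)$, which itself carries $u$-dependence, the mechanism is not simply that a common $u$-dependent scalar factors out; one needs an honest identity from the Bethe-ansatz resolvent structure (essentially the Luttinger-liquid/CFT relation in \cite{NK}). Since the paper itself defers this step to \cite{NK}, your proposal is aligned with the paper's treatment, but you should not present the last step as more than a citation.
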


\noindent 
The proof is not difficult to derive from \cite{NK}, but is crucial in the proof of commutativity of Hamiltonians in Section \ref{Ham-6v-section}.

\subsection{The Thermodynamic Limit and Limit Shapes}

Now let $ M $ be either a planar domain or a cylinder, and let $ \{ M^{(n)}\}_{(n = 1 \cdots \infty)} $ be a sequence of embedded surface graphs as described above. Let $ \{ \chi^{(n)} \} $ be a sequence of normalized boundary height functions on $ \{ M^{(n)} \} $.
The sequence $\{ \chi^{(n)} \} $ is said to be stabilizing if it converges in the uniform metric to a function $ \chi: \partial M \rightarrow \mathbb{R} $. As in the case of the torus one can take for granted that the free energy exists
with stabilizing boundary conditions:
\begin{align*}
f_{M,\chi} = \lim_{n\rightarrow \infty} \epsilon_n^2 \log(Z_{M^{(n)}, \chi^{(n)}}) ,
\end{align*}
where the limit does not depend on the choice of the sequence $\{\epsilon_n\}$.

Similarly to \cite{CKP}, one can argue that the limit of normalized average height function exists:
\begin{align*}
h_0(x,y) =  \lim_{n \rightarrow \infty} \langle h^{(n)}(x,y) \rangle
\end{align*}
Here $ \langle A \rangle $ is the expectation value of a functional on the space of height functions
with respect to the 6-vertex Boltzmann distribution (assuming the identification of
height functions and states of the 6-vertex model).
Moreover, for any  $(x_1, y_1), \dots, (x_k,y_k)$:
\[
\lim_{n\to \infty} \langle h^{(n)}(x_1,y_1), \dots, h^{(n)}(x_k,y_k) \rangle = h_0(x_1,y_1), \dots,  h_0(x_k,y_k)
\]

By techniques essentially identical to \cite{CKP} where height functions were studied for dimer models, it can be shown that the limiting free energy and the limiting height function of the six vertex model in the thermodynamic limit are determined by the following variational problem.

Let us consider first the planar domain. Let $ \mathfrak{H}(\chi) $ be the space of once differentiable functions
on a connected simply connected domain $D\subset \RR^2$ with slopes $ \partial_x h, \partial_y h \in [-1/2,1/2] $, and with fixed tangential derivative at the boundary $\pa_\tau h=\pa_\tau \chi$.  Then the limiting free energy in the thermodynamic limit $ n \rightarrow \infty $ on the planar domain is:
\begin{align} \label{eq:vareq}
f_{D, \chi} = \max_{h \in \mathfrak{H}(\chi) } \int_{D} \sigma(\partial_x h, \partial_ y h) \; dx \; dy
\end{align}
The maximizer is the limit shape $h_0(x,y)$ , i.e. the limit of the
expectation value of the normalized height function.

\begin{remark}
In the presence of magnetic fields the extra term $H\int\int_D \pa_xh \; dx dy +V\int\int_D \pa_xh \; dx dy $
has to be added to the action function functional
\[
S[h]=\int_{D} \sigma(\partial_x h, \partial_ y h) \; dx \; dy
\]
However, since this extra term depends only on boundary values of $h$, it
does not change the minimizer of $S[h]$.
\end{remark}

Turning next to the cylinder, note first that the dependence of the partition function for a cylinder on the horizontal magnetic field is very simple, see (\ref{cyl-mag-field}), so we will assume $H=0$. We assume the boundary condition satisfies $\pa_\tau \chi_i(y+L)= \pa_\tau \chi_i(y)$. Define $ \mathfrak{H}(\chi_1,\chi_2) $ to be the space of once differentiable
functions on  $ C_{T,L}=[0T]\times [0,L]$  (the cylinder with the branch cut along $y=0$)
such that $\pa_{x,y}h(x,y)$ are periodic, $\pa_{x,y}h(x,y+L)=\pa_{x,y}h(x,y)$, and $\pa_yh(0,y)=\pa_y \chi_1(y)$, $\pa_yh(T,y)=\pa_y \chi_2(y)$ .

\begin{proposition}\label{prop:limshape}The free energy for the cylinder is
\begin{align} \label{eq:vareq2}
f_{C_{T,L}, \chi_1, \chi_2} =  \int_{C_{T, L}} \sigma(\partial_x h_0, \partial_ y h_0) \; dx \; dy+V \int_{C_{T, L}} \pa_x h_0 \; dx \; dy
\end{align}
where $h_0(x,y)$ is the unique maximizer of the functional
\begin{align} \label{eq:action}
S[h] = \int_0^T  \int_0^L \left(\sigma( \partial_x h, \partial_y h)   +V\pa_xh \right)\; dx\; dy
\end{align}
on $\mathfrak{H}(\chi_1,\chi_2)$.
\end{proposition}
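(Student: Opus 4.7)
My plan is to adapt the variational argument of Cohn--Kenyon--Propp \cite{CKP} from planar simply connected domains to the cylinder $C_{T,L}$. The two technical inputs I would take as given are the existence of the toroidal free energy (\ref{free-en-t}) and the convexity properties of its Legendre transform $\sigma$ recorded after Proposition \ref{prop:hesssigma}. The only features not present in the planar case are the handling of the magnetic field and of the non-trivial monodromy of $h$ across the branch cut. The horizontal field $H$ decouples at once: by (\ref{cyl-mag-field}) its contribution is the boundary-determined factor $\exp(H m(\eta_1)/2)$, which does not depend on the bulk configuration and drops out of the variational problem, so we may assume $H=0$ throughout.

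For the vertical field, summing the $V$-dependent factors from Figure \ref{fig:icerule} over a configuration $S$ with normalized height function $h^{(\epsilon)}_S$ and using the height-function dictionary of Figure \ref{fig:iceruleheights} yields
\begin{equation*}
\epsilon^2 \log W_{0,V}(S) \;=\; \epsilon^2 \log W_{0,0}(S) \;+\; V \int_{C_{T,L}} \partial_x h^{(\epsilon)}_S \, dx\, dy \;+\; o(1),
\end{equation*}
the $o(1)$ error being a boundary correction. Plugging this in and running the coarse-graining argument of \cite{CKP}, I would partition $C_{T,L}$ into mesoscopic boxes of side $\delta$ with $\epsilon \ll \delta \ll 1$ and, in each box, identify the conditional partition function at fixed local slope $\nabla h = (s,t)$ with $\exp(\epsilon^{-2}\sigma(s,t)\delta^2)$ using the torus limit (\ref{free-en-t}) applied to a sub-cylinder with the slope conditioning. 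Patching across boxes by $C^1$ continuity of $h$, and using that the bulk slope $\nabla h$ is single-valued and $L$-periodic in $y$ so that the branch cut contributes no entropy, the $\epsilon \to 0$ limit of $\epsilon^2 \log Z_{C_{T,L}^{(\epsilon)}, \chi^{(\epsilon)}_1, \chi^{(\epsilon)}_2}$ saturates at $\sup_{h \in \mathfrak{H}(\chi_1,\chi_2)} S[h]$, which is (\ref{eq:vareq2}). Standard concentration then gives $\langle h^{(\epsilon)} \rangle \to h_0$.

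Uniqueness of $h_0$ follows from strict concavity of $\sigma$ on the smooth part of the interior of $[-1/2,1/2]^2$, combined with the linearity of the $V$-term in $\nabla h$. The main obstacle is twofold: first, rigorously justifying the mesoscopic local free-energy identification for the 6-vertex model, where, unlike in the dimer setting of \cite{CKP}, no Kasteleyn-type determinantal structure is available, so one must rely on existence of the torus thermodynamic limit (\ref{free-en-t}) as an assumption and reduce to it by conditioning on local slopes; second, handling the singular loci of $\sigma$ (the corner singularities for $\Delta > 1$ and the conical point at the origin for $\Delta < -1$) where $\sigma$ fails to be strictly concave. On these loci, as in \cite{KO}, one argues that $h_0$ must take the singular slope on the maximal open set where the boundary data force it to, reducing the residual variational problem in the complement to a strictly concave one with a unique solution.
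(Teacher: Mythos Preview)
The paper does not actually prove this proposition. The sentence immediately preceding it reads ``By techniques essentially identical to \cite{CKP} where height functions were studied for dimer models, it can be shown that the limiting free energy and the limiting height function of the six vertex model in the thermodynamic limit are determined by the following variational problem,'' and the proposition is then stated without proof. So there is nothing to compare your argument against beyond that one-line appeal to \cite{CKP}.

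Your proposal is precisely an attempt to flesh out that appeal, and it follows the expected route: decouple $H$ via (\ref{cyl-mag-field}), isolate the $V$-term as a linear functional of $\partial_x h$, coarse-grain into mesoscopic blocks, identify the block free energy with $\sigma$ via the toroidal limit, patch, and conclude by a Laplace-type concentration. You also correctly flag the genuine obstruction the paper glosses over: for the 6-vertex model away from $\Delta=0$ there is no Kasteleyn determinant, so the block free-energy identification cannot be done as in \cite{CKP} and must instead be imported as an assumption from the toroidal limit (\ref{free-en-t}), which the paper itself only ``expects'' (see the footnote there). In that sense your sketch is more honest than the paper about what is and is not proved.

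One small point of friction: you invoke ``strict concavity of $\sigma$'' for uniqueness, whereas the paper states after Proposition~\ref{prop:hesssigma} that $\sigma$ is strictly \emph{convex}. The paper's own pairing of ``strictly convex $\sigma$'' with ``unique maximizer'' is already internally awkward as a sign convention, so this is not a flaw in your reasoning so much as an inherited ambiguity; just make sure your convexity/concavity language is consistent with whichever sign of $\sigma$ you fix.
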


\begin{remark} If we want to instead impose Dirichlet boundary conditions on both ends of
the cylinder, we would impose $h(0,y)= \chi_1(y)$ and $h(T,y)= \chi_2(y)+C$ and would
vary over the height functions inside the domain and over $C \in \mathbb{R} $.
\end{remark}

On the regions where the height function is smooth, $h_0$ satisfies the Euler-Lagrange equation:
\begin{align} \label{eq:heightEL}
\partial_1^2 \sigma( \partial_x h, \partial_y h ) \; \partial _x^2 h + \partial_1 \partial_2 \sigma( \partial_x h, \partial_y h ) \; \partial _x \partial_y h + \partial_2^2 \sigma( \partial_x h, \partial_y h ) \; \partial _y^2 h = 0
\end{align}
By the strict convexity of $ \sigma $ on smooth regions, this is an elliptic partial differential equation.

In the context of height functions we will call the function $\sigma(s,t)$ the {\it surface tension function}.
It is a surface tension function if we would  study the gradient evolution of the height function
with energy functional $S[h]$.

\section{The Hamiltonian Framework and Commuting Hamiltonians} \label{Ham-6v-section}
In this section, we reformulate the variational principle (\ref{prop:limshape}) and second order PDE (\ref{eq:heightEL}) in the Hamiltonian (first order) framework. The Hamiltonian function in this framework should be regarded as the semi-classical limit of the transfer matrix. We prove the main result that when the transfer matrices commute, the corresponding Hamiltonians Poisson commute.

\subsection{Hamiltonian Framework on the Cylinder} We focus on the variational principle on the cylinder $ C_{TL} $. We will think of the $ x$-direction on the cylinder as time. Let $\mathfrak{H}$ be the space of differentiable functions on $[0,L]$ with periodic derivatives.
As usual in Lagrangian mechanics the action $S$ given by (\ref{eq:action}) is the functional on paths $[0,T]\mapsto \mathfrak{H}$ such that $\pa_y h(0,y)=\chi_1(y)$ and $\pa_y h(T,y)=\chi_2(y)$, where $\chi_i$ are $L$-periodic.
The function $ \sigma $ is the Lagrangian density functional on $T\mathfrak{H}$ with $\pa_x h(x,y)$ being tangent vector at $h$. The free energy $ f_{C_{TL}, \eta_1, \eta_2 } $ is the Hamilton-Jacobi action determined by the action functional (\ref{eq:action}).

The Hamiltonian framework is defined in the usual way by passing from Lagrangian densities
on $T\mathfrak{H}$ to Hamiltonian densities on $T^*\mathfrak{H}$ via the Legendre transform. Consider $\{h(y)\}_{y=0}^L$ as a collection of "coordinate functions" on $T^*\mathfrak{H}$.
Let $\{p(y)\}_{y=0}^L$ be the collection of "coordinate functions" on $T^*_h \mathfrak{H}$.
Since tangent space $T_h\mathfrak{H}$ consists of periodic functions, we assume $p(y+L)=p(y)$.
The canonical Poisson bracket on functionals on $T^*_h \mathfrak{H}$ is defined by its values on the coordinate functions $ p $ and $ h $ is:
\begin{align*}
\{h(y), p(y') \} &= \delta(y - y')  \\
\{ p(y), p(y') \} &= 0  \\
\{ h(y), h(y') \} &= 0
\end{align*}

The Hamiltonian density $ \tau $ is the Legendre transform with respect to the first argument of the surface tension:
\begin{align}
\tau_u(p, \xi) = \max_{\nu \in \mathbb{R} } \big( p \; \nu - \sigma_u(\nu, \xi)+V\nu \big)
\end{align}
The Hamiltonian describing the evolution of the height function from
the left end of the cylinder to the right end is given by:
\begin{align}\label{Ham}
H_u(p, h) = \int_{0}^L \tau_u\big(p(y)+V, \partial_y h(y) \big) \; dy
\end{align}
for $ p, h: [0,L] \rightarrow \mathbb{R} $. Here and in the previous formula
we emphasized the dependence of $\sigma$ and $\tau$ on the spectral parameter in Baxter's
parametrization since, as we will see below, this is the parameter of commutative family
of Hamiltonians. Since the dependance of the Hamiltonian on $V$ appears as a change of variable (symplectomorphism)
$p\to p+V, \pa_yh \to \pa_yh$ we will assume from now on that $V=0$.

The Hamiltonian equations of motion are:
\begin{equation} \label{eq:HamEq}
\begin{aligned}
\partial_x p(x,y) &= \{ p(x,y), H \} =  - \frac{\delta H}{ \delta h(y) } \big( p(x,y), h(x, y) \big) \\
\partial_x h(x,y) &= \{ h(x,y, H \} =  \frac{\delta H}{ \delta p(y) } \big( p(x,y), h(x, y) \big)
\end{aligned}
\end{equation}
It is easy to check that they are equivalent to the Euler-Lagrange equations for
the height function (\ref{eq:heightEL}).

Note that in the Hamiltonian framework equations of motion are
extema of the action functional:
\begin{align}
S[p, h] = \int_0^T \int_0^L p(x,y) \partial_y h(x,y) - \tau_u \big( p(x,y), \partial_y h(x,y) \big) \; dy \; dx
\end{align}
for $ p, h: C_{T,L} \rightarrow \mathbb{R} $.

\subsection{Commuting Hamiltonians}In this section we will prove that
Hamiltonians (\ref{Ham}) form a commutative family, i.e. for any $u$ and $v$,
\[
\{ H_u, H_v \} = 0
\]
We first compute the Poisson bracket of two $ H_u $ and $ H_v $ with density $ \tau_u $ and $ \tau_v $:
\begin{lemma}
\begin{equation}
\begin{aligned} \label{eq:hambrack}
\{ H_u, H_v \} =&  \int_0^L \big( \partial_1^2 \tau_u(y) \; \partial_2 \tau_v(y) - \partial_1^2 \tau_v(y) \; \partial_2 \tau_u(y) \big)\;  \partial_y p(y)  \\ & \; \; \; + \big( \partial_1 \partial_2 \tau_u(y) \; \partial_2 \tau_v(y) - \partial_1 \partial_2 \tau_v(y) \; \partial_2 \tau_u(y) \big) \; \partial_y ^2 h(y) \; dy
\end{aligned}
\end{equation}
Here we abbreviated the arguments, that is, by $ \tau_u(y) $ we mean $ \tau_u(p(y), \partial_y h(y) ) $.
\end{lemma}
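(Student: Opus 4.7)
The plan is to compute the Poisson bracket directly from the definition, using that both $H_u$ and $H_v$ are local functionals of $p(y)$ and $\partial_y h(y)$ integrated over the circle $[0,L]$, so that the computation reduces to taking functional derivatives and one integration by parts.

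First I would compute the functional derivatives. For $H_u = \int_0^L \tau_u(p(y),\partial_y h(y))\,dy$, a direct variation in $p$ gives
\[
\frac{\delta H_u}{\delta p(y)} = \partial_1 \tau_u\bigl(p(y),\partial_y h(y)\bigr),
\]
while a variation in $h$, followed by an integration by parts that absorbs the $\partial_y$ off of the test function (the boundary contribution vanishes because $p$ and $\partial_y h$ are $L$-periodic on the cylinder), gives
\[
\frac{\delta H_u}{\delta h(y)} = -\partial_y\!\left[\partial_2 \tau_u\bigl(p(y),\partial_y h(y)\bigr)\right].
\]
An analogous pair of formulas holds for $H_v$.

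Next I would plug these into the canonical bracket
\[
\{H_u,H_v\} = \int_0^L \left(\frac{\delta H_u}{\delta h(y)}\frac{\delta H_v}{\delta p(y)} - \frac{\delta H_u}{\delta p(y)}\frac{\delta H_v}{\delta h(y)}\right) dy,
\]
which, after substitution, reads
\[
\{H_u,H_v\} = \int_0^L \Bigl(\partial_1 \tau_u\,\partial_y(\partial_2 \tau_v) - \partial_1 \tau_v\,\partial_y(\partial_2 \tau_u)\Bigr) dy.
\]
To bring this into the symmetric shape stated in the lemma, I would perform one more integration by parts (again legitimate by periodicity) to move the $\partial_y$ off the $\partial_2$-factor and onto the $\partial_1$-factor, converting the integrand to $\partial_y(\partial_1 \tau_v)\,\partial_2 \tau_u - \partial_y(\partial_1 \tau_u)\,\partial_2 \tau_v$.

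Finally I would expand the remaining $y$-derivatives by the chain rule, using $p$ and $\partial_y h$ as the two arguments of $\tau_u$, $\tau_v$:
\[
\partial_y(\partial_1 \tau_w) = \partial_1^2 \tau_w \cdot \partial_y p + \partial_1\partial_2 \tau_w \cdot \partial_y^2 h,
\]
for $w\in\{u,v\}$. Grouping the $\partial_y p$ and $\partial_y^2 h$ coefficients yields precisely the claimed expression (\ref{eq:hambrack}), up to the overall sign convention of the bracket. The only subtlety worth care is ensuring that both integration-by-parts steps carry no boundary terms; this is immediate since $\partial_y h$ and $p$ are $L$-periodic in $y$ and all factors $\partial_i\tau_\cdot$ are smooth functions of them. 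Everything else is a routine rearrangement, so I do not anticipate any substantive obstacle; the exercise is essentially just careful bookkeeping of two integrations by parts and one application of the chain rule.
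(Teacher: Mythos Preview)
Your proposal is correct and follows essentially the same route as the paper: both reduce the bracket to a single integral via the canonical structure, integrate by parts using $L$-periodicity, and then expand $\partial_y(\partial_1\tau_w)$ by the chain rule. The only cosmetic difference is that the paper works directly with the delta-function expansion of $\{\tau_u(y),\tau_v(y')\}$ in the double integral, whereas you first package this into the functional-derivative formula for the bracket; these are equivalent, and your sign caveat is harmless since the lemma is only used to show the bracket vanishes.
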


\begin{proof}
It is a straightforward computation:
\begin{align*}
\{ H_u, H_v \} = & \int_0^L \int_0^L \big\{ \tau_u(p(y), \partial_y h (y)), \tau_v(p(y'), \partial_{y'} h(y')) \big\} \; dy' \; dy\\\
=& \int_0^L\int_0^L \left(\partial_1 \tau_u(y) \; \partial_2 \tau_v(y') \; \{ p(y), \partial_{y'} h(y') \} + \partial_2 \tau_u(y) \; \partial_1 \tau_v(y') \; \{ \partial_y h(y), p(y') \} \right) \; dy' \; dy \\
=& \int_0^L\int_0^L  \left( -\partial_1 \tau_u(y) \; \partial_2 \tau_v(y') \;  \partial_{y} \delta(y -y')+ \partial_2 \tau_u(y)\; \partial_1 \tau_v(y') \; \partial_y \delta(y-y')  \right)\; dy' \; dy \\
=& \int_0^L\int_0^L  \left(\left( \partial_y \partial_1 \tau_u(y) \right)  \partial_2 \tau_v(y') \delta(y -y') -  \partial_2 \tau_u(y)\left( \partial_{y'} \partial_1 \tau_v(y')\right) \delta(y-y') \right) \; dy' \; dy \\
=& \int_0^L  \left(\left( \partial_1^2 \tau_u \partial_y p + \partial_1 \partial_2 \tau_u \partial_y^2 h \right)  \partial_2 \tau_v -  \partial_2 \tau_u \left( \partial_1^2 \tau_v \partial_y p + \partial_1 \partial_2 \tau_v \partial_y^2 h \right)  \right)\; dy \\
=&  \int_0^L  \left(\big( \partial_1^2 \tau_u \; \partial_2 \tau_v - \partial_1^2 \tau_v \; \partial_2 \tau_u \big)\;  \partial_y p  + \big( \partial_1 \partial_2 \tau_u \; \partial_2 \tau_v - \partial_1 \partial_2 \tau_v \; \partial_2 \tau_u \big) \; \partial_y ^2 h \right)\; dy
\end{align*}
\end{proof}

\begin{proposition} \label{prop:comham}
For any $u$ and $v$, the Hamiltonians $ H_u $ and $ H_v $ Poisson commute if  $\text{Hess}(\sigma_u) = \text{Hess}(\sigma_v)$.
\end{proposition}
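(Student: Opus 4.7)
The strategy is to show directly that the integrand produced by the Lemma is the $y$-derivative of some function of $p$ and $\xi := \partial_y h$, so that the integral around the periodic circle $y\in[0,L]$ collapses to zero by the periodicity of $p(y)$ and $\xi(y)$. Writing the integrand as $A\,\partial_y p+B\,\partial_y\xi$ with $A=\tau^u_{11}\tau^v_2-\tau^v_{11}\tau^u_2$ and $B=\tau^u_{12}\tau^v_2-\tau^v_{12}\tau^u_2$, the integrand is the pullback of the one-form $A\,dp+B\,d\xi$ along the loop $y\mapsto(p(y),\xi(y))$ in the $(p,\xi)$-plane; since the plane is simply connected, it suffices to check the closedness condition $\partial_\xi A=\partial_p B$, whence $A\,dp+B\,d\xi=dF$ and the loop integral vanishes.

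The core step is the reduction
\[
\partial_\xi A-\partial_p B \;=\; \tau^u_{11}\tau^v_{22}-\tau^v_{11}\tau^u_{22}.
\]
This is obtained by differentiating: in $\partial_\xi A-\partial_p B$ the contributions $\tau^u_{112}\tau^v_2$ and $\tau^v_{112}\tau^u_2$ cancel identically, and the cross-terms $\tau^u_{12}\tau^v_{12}$ and $\tau^v_{12}\tau^u_{12}$ coincide, leaving only the stated bilinear in pure second derivatives. Next, the standard Legendre transform identities coming from $\tau_u(p,\xi)=p\nu-\sigma_u(\nu,\xi)$ with $p=\partial_1\sigma_u(\nu,\xi)$ give
\[
\tau_{11}=\tfrac{1}{\sigma_{11}},\qquad \tau_{12}=-\tfrac{\sigma_{12}}{\sigma_{11}},\qquad \tau_{22}=\tfrac{\sigma_{12}^2}{\sigma_{11}}-\sigma_{22}.
\]
Substituting and clearing the denominator $\sigma^u_{11}\sigma^v_{11}$ yields
\[
\sigma^u_{11}\sigma^v_{11}\bigl(\partial_\xi A-\partial_p B\bigr)=\bigl[\sigma^u_{11}\sigma^u_{22}-(\sigma^u_{12})^2\bigr]-\bigl[\sigma^v_{11}\sigma^v_{22}-(\sigma^v_{12})^2\bigr]=\det\mathrm{Hess}(\sigma_u)-\det\mathrm{Hess}(\sigma_v),
\]
where each Hessian is evaluated at the corresponding Legendre-conjugate slope $(\nu_u,\xi)$ or $(\nu_v,\xi)$.

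Under the hypothesis $\mathrm{Hess}(\sigma_u)=\mathrm{Hess}(\sigma_v)$ this difference of determinants vanishes, so $A\,dp+B\,d\xi$ is closed and the line integral around the periodic cycle is zero, giving $\{H_u,H_v\}=0$. The main technical obstacle is the delicate bookkeeping of the second Legendre derivatives needed to reduce $\partial_\xi A-\partial_p B$ to the clean Monge--Amp\`ere-type form above; conceptually this identifies the obstruction to Poisson commutativity with the Hessian determinant of the surface tension, which Proposition \ref{prop:hesssigma} has precisely shown to be a spectral-parameter-independent invariant, thereby guaranteeing involutivity across the entire family of Hamiltonians at fixed $\Delta$.
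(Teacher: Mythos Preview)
Your proof is correct and follows essentially the same approach as the paper's: identify the integrand from the Lemma as $A\,\partial_y p + B\,\partial_y^2 h$, verify the closedness condition $\partial_2 A - \partial_1 B = 0$ so that the integrand is a total $y$-derivative, and reduce this condition to the equality of Hessian determinants via the Legendre transform. The only cosmetic differences are that you phrase the exactness step geometrically (pullback of a closed $1$-form along a loop in a simply connected region) and that you inline the Legendre identities $\tau_{11}=1/\sigma_{11}$, $\tau_{22}=\sigma_{12}^2/\sigma_{11}-\sigma_{22}$ directly, whereas the paper packages this as the appendix identity $\partial_2^2\tau/\partial_1^2\tau = -\det\mathrm{Hess}(\sigma)$; your explicit remark that $\sigma_u$ and $\sigma_v$ are evaluated at the distinct Legendre-conjugate slopes $(\nu_u,\xi)$ and $(\nu_v,\xi)$ is a point the paper's proof leaves implicit.
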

\begin{proof}

From Lemma  \ref{eq:hambrack} we have:
\begin{align*}
\{ H_u, H_v \} &=  \int_0^L \left(A \left( p(y), \partial_y h(y)  \right) \partial_y p + B \left( p(y), \partial_y h(y) \right) \partial_y^2 h\right) \; dy
\end{align*}
where
\begin{align*}
A &=  \partial_1^2 \tau_u \partial_2 \tau_v - \partial_1^2 \tau_v \partial_2 \tau_u \\
B &=  \partial_1 \partial_2 \tau_u \partial_2 \tau_v - \partial_1 \partial_2 \tau_v \partial_2 \tau_u
\end{align*}
If
\begin{align*}
\partial_2 A(s_1, s_2) - \partial_1 B(s_1, s_2) = 0
\end{align*}
on $D=\{(s_1,s_2)| -1/2\leq s_i\leq 1/2\}$, then there exists $ \mathcal{F}(s_1,s_2 ) $ such that:
\begin{equation*}
\partial_1 \mathcal{F} = A, \ \
\partial_2 \mathcal{F} = B
\end{equation*}
and so:
\begin{align*}
\{ H_u, H_v \} &= \int_0^L \left(\partial_1 \mathcal{F}( p(y), \partial_y h(y) ) \; \partial_y p +  \partial_2 \mathcal{F}( p(y), \partial_y h(y) ) \; \partial_y^2 h\right) \; dy \\
&= \int_0^L \partial_y \mathcal{F}( p(y), \partial_y(h) ) \; dy = 0
\end{align*}

Now we compute $ \partial_2 A - \partial_1 B $:
\begin{align*}
\partial_2 A - \partial_1 B =& \;  \partial_2 \left( \partial_1^2 \tau_u \partial_2 \tau_v - \partial_1^2 \tau_v \partial_2 \tau_u \right) - \partial_1  \left(  \partial_1 \partial_2 \tau_u \partial_2 \tau_v - \partial_1 \partial_2 \tau_v \partial_2 \right) \\
=& \; (\partial_1^2 \partial_2 \tau_u \partial_2 \tau_v + \partial_1^2 \tau_u \partial_2^2 \tau_v - \partial_1^2 \partial_2 \tau_v \partial_2 \tau_u - \partial_1^2 \tau_v \partial_2^2 \tau_u  )  \\ & -  ( \partial_1^2 \partial_2 \tau_u \partial_2 \tau_v + \partial_1 \partial_2 \tau_u \partial_1 \partial_2 \tau_v - \partial_1^2 \partial_2 \tau_v \partial_2 \tau_u  - \partial_1 \partial_2 \tau_v \partial_1 \partial_2 \tau_u ) \\
=& \partial_1^2 \tau_u \partial_2^2 \tau_v - \partial_1^2 \tau_v \partial_2^2 \tau_u \\
=& \partial_1^2 \tau_u \; \partial_1^2 \tau_v \; \left( \frac{ \partial_2^2 \tau_v }{\partial_1^2 \tau_v } - \frac{ \partial_2^2 \tau_u }{\partial_1^2 \tau_u } \right) \\
=& \partial_1^2 \tau_u \; \partial_1^2 \tau_v  \left( \text{Hess}(\sigma_v) - \text{Hess}(\sigma_u) \right)
\end{align*}
In the last line, we used Lemma \ref{Hess-Lemma} from the appendix.

\end{proof}
\noindent Together, propositions \ref{prop:hesssigma} and  \ref{prop:comham} imply:

\begin{corollary}
The Hamiltonians $ H_u $ of the 6-vertex model form a Poisson commuting family.
\end{corollary}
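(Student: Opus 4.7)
The plan is to assemble the corollary directly from the two propositions that immediately precede it, with essentially no further work beyond bookkeeping about parameters. Fix a value of $\Delta$; Baxter's parametrization from Section \ref{subsec:baxter} then gives a one-parameter family of weights $(a(u),b(u),c(u))$ depending on the spectral parameter $u$, and correspondingly a family of surface tensions $\sigma_u(s,t)$ (Legendre transforms of the toroidal free energies) and Hamiltonians $H_u$ built from the Legendre-dual densities $\tau_u$ via (\ref{Ham}). The corollary concerns commutativity within this one-parameter family at fixed $\Delta$, consistent with the fact that the Yang-Baxter equation (\ref{eq:ybe}) requires a common $\Delta$ for all three $R$-matrices.

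The first step is to invoke Proposition \ref{prop:hesssigma}: the Hessian determinant $\det(\partial_i\partial_j \sigma_u)(s,t)$ is a function of $(s,t,\Delta)$ only, not of the spectral parameter $u$. Hence for any two spectral values $u$ and $v$ at the same $\Delta$, one has the pointwise identity
\begin{equation*}
\mathrm{Hess}(\sigma_u)(s,t) \;=\; \mathrm{Hess}(\sigma_v)(s,t)
\end{equation*}
on the allowed slope region $-1/2 \le s,t \le 1/2$.

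The second and final step is to feed this equality into Proposition \ref{prop:comham}, whose hypothesis is precisely $\mathrm{Hess}(\sigma_u)=\mathrm{Hess}(\sigma_v)$, to conclude $\{H_u,H_v\}=0$. Since $u$ and $v$ were arbitrary points of the family indexed by the spectral parameter at fixed $\Delta$, the entire family $\{H_u\}_u$ is Poisson commutative, as claimed.

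There is no genuine obstacle here; the real content sits in the two invoked propositions. The only point that deserves explicit attention in writing the proof is making sure the reader understands that the commutative family is generated by varying the spectral parameter $u$ with $\Delta$ held fixed, so that Proposition \ref{prop:hesssigma} applies verbatim to supply the Hessian-equality hypothesis of Proposition \ref{prop:comham}.
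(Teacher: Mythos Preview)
Your proposal is correct and follows exactly the paper's own route: the corollary is stated immediately after noting that Propositions \ref{prop:hesssigma} and \ref{prop:comham} together imply it, with no further argument. Your additional remark that the family is indexed by the spectral parameter at fixed $\Delta$ is precisely the right bookkeeping to make the implication transparent.
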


\begin{remark} The existence of Hamiltonians $H_u$ is due to limit shape phenomenon,
which can be regarded as a semiclassical nature of the thermodynamic limit. The Poisson
commutativity of Hamiltonians is a consequence of commutativity of transfer-matrices and
thus, of the Yang-Baxter equation for the weights of the 6-vertex model. One can show that
for any model with limit shape phenomenon and commutative family of transfer-matrices
we have a family of Poisson commutative Hamiltonians for limit shapes on a cylinder.
The details will be given elsewhere.
\end{remark}

\section{The Dimer Model, Burger's Equation.}\label{dimer-section}

\subsection{Dimer models}

Here we will recall some basic facts on dimer models on surface bipartite graphs.
More details can be found in \cite{KOS}\cite{KO}\cite{CR}.

A dimer configuration $\mathcal{D} $ on a graph $\Gamma$ is a  subset of edges
called dimers, such that each vertex of valency greater then $1$ is adjacent to exactly one dimer.
One valent vertices need not belong to a dimer, and we will refer the set of $1$-valent
vertices as a boundary $\pa \Gamma$ of $\Gamma$. In other words, a dimer configuration is a perfect matching on the set of vertices connected by edges. A dimer configuration is illustrated in Figure \ref{dimers}.

\begin{figure}[h]
\begin{tikzpicture}[scale=1]
\coordinate (d0) at (-1,.1);
\coordinate (d1) at (-.4,.7);
\coordinate (d2) at (.6,.6);
\coordinate (d3) at (1.1,-0);
\coordinate (d4) at (.5,-.7);
\coordinate (d5) at (-.4,-.65);

\coordinate (b0) at (-2,-.2);
\coordinate (b1) at (-.4,1.4);
\coordinate (b2) at (1.2,1);
\coordinate (b3) at (2,0);
\coordinate (b4) at (.6,-1.4);
\coordinate (b5) at (-.6,-1.3);

\coordinate (b01) at (-1.3,.6);
\coordinate (b01a) at (-2,.2);
\coordinate (b01b) at ( -1,1.3);
\coordinate (b12) at (.5,1.3);
\coordinate (b23) at (1.7,.6);
\coordinate (b34) at (1.6,-.8);
\coordinate (b45) at (-.1,-1.4);
\coordinate (b50) at (-1.5,-1.0);

\draw [dashed] plot [smooth cycle] coordinates {(b0)(b01a)(b01)(b01b)(b1)(b12)(b2)(b23)(b3)(b34)(b4)(b45)(b5)(b50)};
\fill [color= black!5!white] plot [smooth cycle] coordinates {(b0)(b01a)(b01)(b01b)(b1)(b12)(b2)(b23)(b3)(b34)(b4)(b45)(b5)(b50)};

\draw (d0)--(d1);
\draw (d1)--(d2);
\draw [ultra thick] (d2)--(d3);
\draw (d3)--(d4);
\draw (d4)--(d5);
\draw [ultra thick] (d5)--(d0);

\draw (d0)--(b0);
\draw [ultra thick] (d1)--(b1);
\draw (d2)--(b2);
\draw (d3)--(b3);
\draw [ultra thick] (d4)--(b4);
\draw (d5)--(b5);

\coordinate (s) at (.05,.05);

\fill (d0) circle (2pt);
\fill[white] (d1) circle (2pt);\draw (d1) circle (2pt);
\fill (d2) circle (2pt);
\fill[white] (d3) circle (2pt);\draw (d3) circle (2pt);
\fill (d4) circle (2pt);
\fill[white] (d5) circle (2pt);\draw (d5) circle (2pt);

\fill ($ (b0) - (s) $ ) rectangle  ($ (b0) + (s) $ );
\fill ($ (b1) - (s) $ ) rectangle  ($ (b1) + (s) $ );
\fill ($ (b2) - (s) $ ) rectangle  ($ (b2) + (s) $ );
\fill ($ (b3) - (s) $ ) rectangle  ($ (b3) + (s) $ );
\fill ($ (b4) - (s) $ ) rectangle  ($ (b4) + (s) $ );
\fill ($ (b5) - (s) $ ) rectangle  ($ (b5) + (s) $ );

\end{tikzpicture}

\caption{A dimer configuration on a surface graph.}
\end{figure}
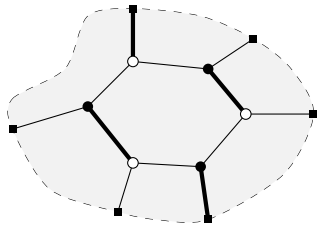 \label{dimers}

To define a dimer model on $\Gamma$, one should fix a weight function, i.e.
a mapping from edges of the graph to non-negatives real number:
\[
W: e\mapsto w(e)
\]
Then the weight of a dimer configuration $\mathcal{D}$ is defined
as
\[
W(\mathcal{D})=\prod_{e\in \mathcal{D}} w(e)
\]
The dimer configuration on edges adjacent to $1$-valent vertices
is called the boundary value $\pa \mathcal{D}$ of $\mathcal{D}$.

The partition function with boundary value $\mathcal{B}$ is
\[
Z_{\Gamma, \mathcal{B}}=\sum_{\mathcal{D}: \pa \mathcal{D}=\mathcal{B}} W(\mathcal{D})
\]

Here we will consider only dimer models on bipartite surface graphs. Recall that such a graph is a triple $(\Gamma, \phi, \Sigma)$ where $\Gamma$ is
a bipartite graph with vertices of possible valence $1, 3, 4,\dots$,
$\Sigma$ is a compact oriented surface, $\phi: \Gamma \to \Sigma$ is an
embedding such that all 1-valent vertices are mapped to $\pa \Sigma$.
So far we do not require any other structure on $\Sigma$ such as metric.

\subsection{Height Functions} \label{dimer-height}

 Two dimer configurations $\mathcal{D}$ and $\mathcal{D}_0$
on a bipartite surface graph $\Gamma$ define
a system of oriented composition cycles with edges occupied by $\mathcal{D}$ oriented from black
vertices to white vertices and with edges occupied by $\mathcal{D}_0$ from white to black.
This is illustrated on Figure \ref{fig:dimers2}.
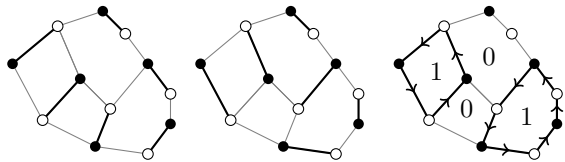
\begin{figure}[h]
\begin{tikzpicture}[scale=1]
\coordinate (b0) at (-.8,.1);
\coordinate (b1) at (.1,-.1);
\coordinate (b2) at (.4,.8);
\coordinate (b3) at (.3,-1);
\coordinate (b4) at (1,.1);
\coordinate (b5) at (1.3,-.7);
\coordinate (w0) at (-.2,.6);
\coordinate (w1) at (-.4,-.65); 
\coordinate (w2) at (.5,-.5);
\coordinate (w3) at (.7, .5);
\coordinate (w4) at (1,-1.1);
\coordinate (w5) at (1.3,-.3);

\draw[gray] (b0)--(w1)--(b1)--(w0)--(b0);
\draw[gray] (w1)--(b1)--(w2)--(b3)--(w1);
\draw[gray] (w0)--(b1)--(w2)--(b4)--(w3)--(b2)--(w0);
\draw[gray] (b3)--(w4)--(b5)--(w5)--(b4);

\draw[thick](b0)--(w0);
\draw[thick](b1)--(w1);
\draw[thick](b3)--(w2);
\draw[thick](b4)--(w5);
\draw[thick](b5)--(w4);
\draw[thick](b2)--(w3);

\foreach \i in {0,...,5}
{
\fill (b\i) circle (2pt);
\fill[white] (w\i) circle (2pt);\draw (w\i) circle (2pt);
}
\end{tikzpicture}\;
\begin{tikzpicture}[scale=1]
\coordinate (b0) at (-.8,.1);
\coordinate (b1) at (.1,-.1);
\coordinate (b2) at (.4,.8);
\coordinate (b3) at (.3,-1);
\coordinate (b4) at (1,.1);
\coordinate (b5) at (1.3,-.7);
\coordinate (w0) at (-.2,.6);
\coordinate (w1) at (-.4,-.65); 
\coordinate (w2) at (.5,-.5);
\coordinate (w3) at (.7, .5);
\coordinate (w4) at (1,-1.1);
\coordinate (w5) at (1.3,-.3);

\draw[gray] (b0)--(w1)--(b1)--(w0)--(b0);
\draw[gray] (w1)--(b1)--(w2)--(b3)--(w1);
\draw[gray] (w0)--(b1)--(w2)--(b4)--(w3)--(b2)--(w0);
\draw[gray] (b3)--(w4)--(b5)--(w5)--(b4);

\draw[thick](b0)--(w1);
\draw[thick](b1)--(w0);
\draw[thick](b3)--(w4);
\draw[thick](b5)--(w5);
\draw[thick](b4)--(w2);
\draw[thick](b2)--(w3);

\foreach \i in {0,...,5}
{
\fill (b\i) circle (2pt);
\fill[white] (w\i) circle (2pt);\draw (w\i) circle (2pt);
}
\end{tikzpicture}
\;
\begin{tikzpicture}[scale=1]
\coordinate (b0) at (-.8,.1);
\coordinate (b1) at (.1,-.1);
\coordinate (b2) at (.4,.8);
\coordinate (b3) at (.3,-1);
\coordinate (b4) at (1,.1);
\coordinate (b5) at (1.3,-.7);
\coordinate (w0) at (-.2,.6);
\coordinate (w1) at (-.4,-.65); 
\coordinate (w2) at (.5,-.5);
\coordinate (w3) at (.7, .5);
\coordinate (w4) at (1,-1.1);
\coordinate (w5) at (1.3,-.3);

\draw[gray] (b0)--(w1)--(b1)--(w0)--(b0);
\draw[gray] (w1)--(b1)--(w2)--(b3)--(w1);
\draw[gray] (w0)--(b1)--(w2)--(b4)--(w3)--(b2)--(w0);
\draw[gray] (b3)--(w4)--(b5)--(w5)--(b4);

\draw[thick, decoration={markings,mark=at position .6 with {\arrow[scale=1]{<}}},
    postaction={decorate}](b0)--(w0);
\draw[thick, decoration={markings,mark=at position .6 with {\arrow[scale=1]{<}}},
    postaction={decorate}](b1)--(w1);
\draw[thick, decoration={markings,mark=at position .6 with {\arrow[scale=1]{<}}},
    postaction={decorate}](b3)--(w2);
\draw[thick, decoration={markings,mark=at position .6 with {\arrow[scale=1]{<}}},
    postaction={decorate}](b4)--(w5);
\draw[thick, decoration={markings,mark=at position .6 with {\arrow[scale=1]{<}}},
    postaction={decorate}](b5)--(w4);
\draw[thick, decoration={markings,mark=at position .6 with {\arrow[scale=1]{<}}},
    postaction={decorate}](w1)--(b0);
\draw[thick, decoration={markings,mark=at position .6 with {\arrow[scale=1]{<}}},
    postaction={decorate}](w0)--(b1);
\draw[thick, decoration={markings,mark=at position .6 with {\arrow[scale=1]{<}}},
    postaction={decorate}](w4)--(b3);
\draw[thick, decoration={markings,mark=at position .6 with {\arrow[scale=1]{<}}},
    postaction={decorate}](w5)--(b5);
\draw[thick, decoration={markings,mark=at position .6 with {\arrow[scale=1]{<}}},
    postaction={decorate}](w2)--(b4);

\node at (-.3,0){\footnotesize{$1$}};
\node at (.9,-.6){\footnotesize{$1$}};
\node at (.4,.2){\footnotesize{$0$}};
\node at (.1,-.5){\footnotesize{$0$}};

\foreach \i in {0,...,5}
{
\fill (b\i) circle (2pt);
\fill[white] (w\i) circle (2pt);\draw (w\i) circle (2pt);
}

\
\end{tikzpicture}
\caption{Dimer configurations $ \mathcal{D}_0$ and $ \mathcal{D} $. On the right side is the composition cycle $ ( \mathcal{D}, \mathcal{D}_0) $ and $ \theta_{\mathcal{D}, \mathcal{D}_0} $ } \label{fig:dimers2}
\end{figure} 

For a planar surface graph on a connected simply connected domain, the
oriented composition cycle $(\mathcal{D}, \mathcal{D}_0)$ defines a function $\theta_{\mathcal{D}, \mathcal{D}_0}$
on faces of $\Gamma\subset D$
with the composition cycles being its level curves; the height function increments when crossing a cycle oriented from left to right. This function is defined up to a constant,
which can be fixed by fixing its value at a chosen reference face.

As in the $6$-vertex model, for surface graphs on non simply-connected surfaces, the height function
does not exist as a function of faces. We will focus on the case of dimer models on a cylinder and torus with branch cuts, where the height function can be defined globally. 

From here on, we assume a fixed reference dimer configuration $ \mathcal{D}_0 $, so that there is a bijection of dimer configurations $ \mathcal{D} $ and height functions $ \theta_\mathcal{D} $. 

For  any path $C$ (on a dual graph to $\Gamma\subset \Sigma$) connecting two faces,
the difference $\Delta_C\theta$ between the value of $\theta$ at the end points of $C$
does not depend on $C$ (we assume that $C$ does not cross branch cuts). If $C$ is a closed contour on $\Sigma$,
the change of $\theta$ around $C$ depends only on the homology class of $C$.
For non simply-connected surfaces this allows for an addition weight $z_C$ for each basis cycle $ C$ of  $ \Sigma $.

It is not difficult to see that the dimer partition function for a dimer model on a bipartite surface graph
with extra weights corresponding to the collection of cycles $\{C_i\}$ can be written in terms of
height functions as 
\begin{equation}\label{T}
Z_\Gamma(z_1, \cdots, z_n)= \frac{1}{W(\mathcal{D}_0) } \sum_\theta W(\theta) \prod_i z_i^{\Delta_{C_i} \theta }
\end{equation}
where $z_i$ is the weight for the cycle $C_i$. The sum is taken over all possible height functions on
the graph and
\begin{equation}\label{hf-weights}
W(\theta)=\prod_f q_f^{ \theta(f) }, \ \  q_f=\prod_{e\subset \pa f} w(e)^{\varepsilon(e,f)}
\end{equation}
Here the product is taken over all 2-cells of $\Gamma\subset \Sigma$. Edges of $\Gamma$ are oriented from black
vertices to white and $\varepsilon(e,f)$ is $1$ when the orientation of $e$ agrees with the orientation of
$\pa f$ induced by the orientation of $\Sigma$,  and $-1$ otherwise. 

For a torus with extra weights $z$ and $w$ corresponding
to vertical and horizontal cycles we have
\begin{equation}\label{T}
Z_\Gamma(z,w)= \frac{1}{W(\mathcal{D}_0) } \sum_{\theta} W(\theta)  z^{\Delta_b\theta}w^{\Delta_a \theta}
\end{equation}
where $a$ is a "vertical" cycle and $b$ is a "horizontal" one. 

For a cylinder we have
\begin{equation}\label{C}
Z_{\Gamma, \theta_1, \theta_2}(z)=\sum_{\theta} W(\theta) w^{\Delta_c\theta}
\end{equation}
where $c$ is a "vertical" cycle, $\theta_1$ and $\theta_2$ are boundary tangential differences of $\theta$ on two
ends of the cylinder, and the sum is taken over all height functions with these boundary conditions.
See Appendix \ref{h-f} for more details height functions.

As in the $6$-vertex model we will refer to $H=\log z$ and $V=\log w$ as horizontal and vertical magnetic fields.

\subsection{Thermodynamic limit}As for the 6-vertex model we will consider only three types
of surfaces: a place,  cylinder and a torus. To define our planar graphs
fix a $ \mathbb{Z}^2 $ periodic lattice ${\mathcal L}$ with the fundamental domain $\Gamma_0$.
Locally our plane graphs have the structure of ${\mathcal L}$. To describe
our sequences of graphs globally, as for the 6-vertex model, fix an embedding
$\phi_\epsilon: {\mathcal L}\to \RR^2$ with mesh $\epsilon$. The definition of the
mesh may vary, for example for a square lattice, as in the 6-vertex model, we
can require that $\epsilon$ is the Euclidean length of images of generators.

Here we will focus on the free energy for a cylinder. For simplicity let us assume
that translation axes of $\phi_\epsilon(\mathcal{L})$ are parallel to $x$- and $y$- axes.
In other words $\phi_\epsilon(\mathcal{L})$ looks like a plane graph from Figure \ref{fig:PeriodicDimerGraph}.
After $\phi_\epsilon$ is fixed we define $C^\epsilon_{T,L}=[C_{T,L}\cap \phi_\epsilon({\mathcal L})]$.
The graph is closed in the vertical direction and its boundary $1$-valent vertices belong to
two boundaries of the cylinder.

\begin{figure}[h]
\begin{tikzpicture}[scale=1]

\coordinate (t0) at (0,0);
\draw ($ (0,0) + (t0)  $) rectangle ($ (1,1) + (t0) $);
\node at ( $ (.5,.5) + (t0) $ ) {$ \Gamma_0 $};
\draw  ( $ (1,.65) + (t0) $ ) --($ (1.2,.65)  + (t0) $);
\draw ($(1,.35)+ (t0) $ ) -- ($(1.2,.35)+ (t0) $ );
\draw ($(0,.65)+ (t0) $ ) -- ($(-.2,.65)+ (t0) $ );
\draw ($(0,.35)+ (t0) $ ) --($ (-.2,.35)+ (t0) $ );
\draw ($(.65,1)+ (t0) $ ) -- ($(.65,1.2)+ (t0) $ );
\draw ($(.35,1) + (t0) $ )-- ($(.35,1.2)+ (t0) $ );
\draw ($(.65,0) + (t0) $ ) -- ($(.65,-.2)+ (t0) $ );
\draw ($(.35,0) + (t0) $ ) -- ($(.35,-.2)+ (t0) $ );

\coordinate (t0) at (3,0);
\draw ($ (0,0) + (t0)  $) rectangle ($ (1,1) + (t0) $);
\node at ( $ (.5,.5) + (t0) $ ) {$ \Gamma_0 $};
\draw  ( $ (1,.65) + (t0) $ ) --($ (1.2,.65)  + (t0) $);
\draw ($(1,.35)+ (t0) $ ) -- ($(1.2,.35)+ (t0) $ );
\draw ($(0,.65)+ (t0) $ ) -- ($(-.2,.65)+ (t0) $ );
\draw ($(0,.35)+ (t0) $ ) --($ (-.2,.35)+ (t0) $ );
\draw ($(.65,1)+ (t0) $ ) -- ($(.65,1.2)+ (t0) $ );
\draw ($(.35,1) + (t0) $ )-- ($(.35,1.2)+ (t0) $ );
\draw ($(.65,0) + (t0) $ ) -- ($(.65,-.2)+ (t0) $ );
\draw ($(.35,0) + (t0) $ ) -- ($(.35,-.2)+ (t0) $ );

\coordinate (t0) at (0,3);
\draw ($ (0,0) + (t0)  $) rectangle ($ (1,1) + (t0) $);
\node at ( $ (.5,.5) + (t0) $ ) {$ \Gamma_0 $};
\draw  ( $ (1,.65) + (t0) $ ) --($ (1.2,.65)  + (t0) $);
\draw ($(1,.35)+ (t0) $ ) -- ($(1.2,.35)+ (t0) $ );
\draw ($(0,.65)+ (t0) $ ) -- ($(-.2,.65)+ (t0) $ );
\draw ($(0,.35)+ (t0) $ ) --($ (-.2,.35)+ (t0) $ );
\draw ($(.65,1)+ (t0) $ ) -- ($(.65,1.2)+ (t0) $ );
\draw ($(.35,1) + (t0) $ )-- ($(.35,1.2)+ (t0) $ );
\draw ($(.65,0) + (t0) $ ) -- ($(.65,-.2)+ (t0) $ );
\draw ($(.35,0) + (t0) $ ) -- ($(.35,-.2)+ (t0) $ );

\coordinate (t0) at (3,3);
\draw ($ (0,0) + (t0)  $) rectangle ($ (1,1) + (t0) $);
\node at ( $ (.5,.5) + (t0) $ ) {$ \Gamma_0 $};
\draw  ( $ (1,.65) + (t0) $ ) --($ (1.2,.65)  + (t0) $);
\draw ($(1,.35)+ (t0) $ ) -- ($(1.2,.35)+ (t0) $ );
\draw ($(0,.65)+ (t0) $ ) -- ($(-.2,.65)+ (t0) $ );
\draw ($(0,.35)+ (t0) $ ) --($ (-.2,.35)+ (t0) $ );
\draw ($(.65,1)+ (t0) $ ) -- ($(.65,1.2)+ (t0) $ );
\draw ($(.35,1) + (t0) $ )-- ($(.35,1.2)+ (t0) $ );
\draw ($(.65,0) + (t0) $ ) -- ($(.65,-.2)+ (t0) $ );
\draw ($(.35,0) + (t0) $ ) -- ($(.35,-.2)+ (t0) $ );

\draw (0,0) rectangle (4,4);
\draw (1,1) rectangle (3,3);

\node at (.5,2) { $ \cdots $};
\node at (2,.5) { $ \cdots $};
\node at (3.5,2) { $ \cdots $};
\node at (2,3.5) { $ \cdots $};
\node at (2,2) { $ \cdots $ };
\end{tikzpicture}
\caption{A periodic graph with fundamental domain $ \Gamma_0 $.  } \label{fig:PeriodicDimerGraph}
\end{figure}
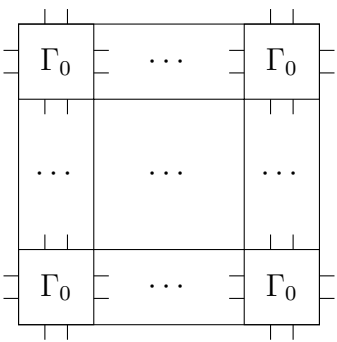 

We assume that weights of the dimer model are periodic with the fundamental domain $\Gamma_0$.
Define normalized height function $\Sigma$ as a piecewise constant on faces of $C^\epsilon_{T,L}\subset C_{T,L}$ with value
\[
h(x,y)=\epsilon \theta(f),
\]
if $(x,y)$ is in the interior of $f$.

For the periodic, periodically weighted graphs with fundamental domain $\Gamma_0$ as in Figure \ref{fig:PeriodicDimerGraph}
we assume that $\epsilon$ is the edge length for vertical and horizontal
edges connecting fundamental domains and the distances between neighboring horizontal and vertical edges
are also $\epsilon$. Similar assumptions will be held for a torus, which is the "closure" of a cylinder.

\subsubsection{The free energy for a torus}
From now on we assume that $\Gamma$ is a surface
periodic graph described above with the fundamental domain $\Gamma_0$ and with
periodic weights.

The density of the free energy exists for a torus is the limit of the dimer partition function (\ref{T}).
This is a corollary of the Kasteleyn formula for the dimer partition function \cite{Ka}.
The following answer was derived in \cite{KOS}:
\begin{align*}
f_{\mathbb{T}}(H,V) &=\lim_{\epsilon_n\to 0}\frac{1}{LT\epsilon_n^2}\log Z_{T^{(\epsilon_n)}_{M,N}}(e^H,e^V) \\ &= \frac{1}{( 2 \pi i )^2 } \int_{|z| = e^H} \int_{|w| = e^V} \log|P(z,w)| \; dz \; dw
\end{align*}
Here $P(z,w)$ is the characteristic polynomial of the dimer model on periodically
weighted graph
\[
P(z,w)=\sum_{\theta} W(\theta)z^{\Delta_b\theta}w^{\Delta_a\theta}(-1)^{ \Delta_a\theta\Delta_b\theta+\Delta_a\theta+ \Delta_b\theta}
\]
where the sum is taken over height functions on $\Gamma_0$ with toric boundary conditions.

The surface tension function $\sigma $ is the Legendre dual of the free energy:
\begin{align}\label{sigma-dim}
\sigma(s,t) = \max_{H,V} \big( s H + t V - f_{\mathbb{T}}(H,V) \big)
\end{align}
This is also the density of the conditional free energy for a torus with the average magnetization
constrained to $s$ in the horizontal direction and to $t$ in the vertical direction.

\subsubsection{The limit shape and Burgers' equation}
As for the 6-vertex model the free energy of a dimer model on a cylinder is determined by
the minimizer of
\[
S[h]= \int_0^T\int_0^L \sigma(\pa_xh, \pa_y h) \; dx \; dy
\]
where $\sigma(s,t)$ is the surface tension function (\ref{sigma-dim}).
For the same reason as in the 6-vertex case we assume that magnetic fields are zero.

The Euler-Lagrange equations for the limit shape are then:
\begin{equation} \label{eq:dimel}
\pa^2_1\sigma( \pa_x h, \pa_y h) \;  \pa^2_xh + 2 \; \pa_1\pa_2\sigma(\pa_x h, \pa_y h) \; \pa_x\pa_y h + \pa^2_2\sigma(\pa_x h, \pa_y h) \; \pa^2_yh = 0
\end{equation}

Define the new variables:
\begin{equation}
\begin{aligned} \label{eq:defzw}
z &= \mp \exp( i \pi \pa_y h + \pa_1\sigma(\pa_xh,\pa_yh) ) \\
w &= \pm \exp(- i \pi \pa_x h + \pa_2\sigma(\pa_xh,\pa_yh) )
\end{aligned}
\end{equation}
Then the Euler-Lagrange equations in terms of $ z $ and $ w $ are
\begin{equation} \label{eq:elzw}
\frac{\pa_x z}{z} + \frac{\pa_y w}{w} = 0
\end{equation}
The imaginary part gives the identity $ \partial_y \pa_xh = \partial_x \pa_y h$.
\; \\ \; \\
\noindent
The following theorem is essentially proven in \cite{KO} that
\begin{theorem}\cite{KO} Equations (\ref{eq:dimel}) and (\ref{eq:defzw}), for some choice of branch, together imply $ P(z,w) = 0 $. Conversely, if $ z $ and $ w $ satisfy $ P(z,w) =0 $ and (\ref{eq:elzw}), then $ \pa_xh = -\frac{1}{i \pi} \arg(w) $ and $ \pa_yh = \frac{1}{i \pi} \arg(z) $ satisfy the Euler-Lagrange equation (\ref{eq:dimel}).
\end{theorem}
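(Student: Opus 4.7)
My plan is to exploit the Kenyon--Okounkov parametrization of the spectral curve by slopes of the limit shape, which is the content of \cite{KO}, together with a direct chain-rule computation that identifies the Euler--Lagrange equation with the complex Burgers equation.

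For the forward direction, I would first reinterpret the right-hand sides of (\ref{eq:defzw}) via Legendre duality. Since $\sigma$ is the Legendre transform of $f_{\mathbb T}(H,V)$, one has $\partial_1\sigma(s,t)=H$ and $\partial_2\sigma(s,t)=V$, where $(H,V)$ is the magnetic field dual to the slopes $(s,t)=(\partial_x h,\partial_y h)$. Substituting into (\ref{eq:defzw}) immediately yields
\[
\log|z|=H,\qquad \arg z\equiv \pi\,\partial_y h\pmod{\pi},\qquad \log|w|=V,\qquad \arg w\equiv -\pi\,\partial_x h\pmod{\pi}.
\]
These are precisely the formulas by which KO invert the amoeba map: for $(H,V)$ in the interior of the amoeba of $P$ there is a distinguished pair of complex-conjugate points on $\{P=0\}$ over the torus $\{|z|=e^H,|w|=e^V\}$, and their arguments recover the slopes through $\partial_H f_{\mathbb T}$ and $\partial_V f_{\mathbb T}$. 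With the matching choice of branch and of the sign ambiguity $\mp,\pm$, the $(z,w)$ defined by (\ref{eq:defzw}) coincides with this distinguished point, so $P(z,w)=0$.

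The role of the Euler--Lagrange equation (\ref{eq:dimel}) is then to supply the compatible dynamics in $(x,y)$. Differentiating (\ref{eq:defzw}) by the chain rule,
\[
\partial_x\log z = i\pi\,\partial_x\partial_y h+\partial_1^2\sigma\,\partial_x^2 h+\partial_1\partial_2\sigma\,\partial_x\partial_y h,
\]
\[
\partial_y\log w = -i\pi\,\partial_y\partial_x h+\partial_1\partial_2\sigma\,\partial_x\partial_y h+\partial_2^2\sigma\,\partial_y^2 h,
\]
and summing: the imaginary parts cancel by equality of mixed partials, while the real part is exactly the left-hand side of (\ref{eq:dimel}). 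Hence, under the substitution (\ref{eq:defzw}), the Euler--Lagrange equation is equivalent to the complex Burgers equation (\ref{eq:elzw}). Combined with $P(z,w)=0$ from the previous step, this establishes the forward direction.

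For the converse, I would start from $P(z,w)=0$ and (\ref{eq:elzw}) and define $s=-\arg(w)/\pi$, $t=\arg(z)/\pi$. The imaginary part of (\ref{eq:elzw}) forces $\partial_x t=\partial_y s$, so $(s,t)$ is the gradient of a real function $h$ (up to an additive constant), locally. The KO parametrization, read in the other direction, gives $\log|z|=\partial_1\sigma(s,t)$ and $\log|w|=\partial_2\sigma(s,t)$, so the same chain-rule computation turns the real part of (\ref{eq:elzw}) back into (\ref{eq:dimel}). The main obstacle I anticipate is purely book-keeping: one must match the sign ambiguities $\mp,\pm$ and the branch of the logarithm in (\ref{eq:defzw}) to the two complex-conjugate sheets of $\{P=0\}$ over the interior of the amoeba, and verify that this choice can be made consistently on the region where $h_0$ is smooth (away from frozen and gas facets). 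Once the branch issue is settled, all remaining manipulations are a routine application of the chain rule.
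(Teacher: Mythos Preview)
The paper does not supply its own proof of this theorem; it is stated with attribution to \cite{KO} and used as input for what follows. There is therefore nothing in the paper to compare your argument against.

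That said, your outline is a correct sketch of the Kenyon--Okounkov argument. The chain-rule computation you wrote down is exactly the mechanism by which (\ref{eq:dimel}) and (\ref{eq:elzw}) are seen to be equivalent under the substitution (\ref{eq:defzw}): the imaginary part of $\partial_x\log z+\partial_y\log w$ vanishes by symmetry of mixed partials, and the real part is the left side of (\ref{eq:dimel}). The genuinely nontrivial input you are invoking from \cite{KO} and \cite{KOS} is the identification of $\sigma$ with the Legendre transform of the Ronkin function of $P$, which is what guarantees that the point $(z,w)$ with $|z|=e^{\partial_1\sigma}$, $|w|=e^{\partial_2\sigma}$ and the prescribed arguments actually lies on $\{P=0\}$; this is not a formal consequence of Legendre duality alone but requires the amoeba theory for Harnack curves. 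Your acknowledgment that the branch and sign choices must be matched to the two conjugate sheets over the amoeba interior is the right caveat, and is precisely the ``for some choice of branch'' in the statement.
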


Let $z$ and $ w $ be as in (\ref{eq:defzw}). We can rewrite the Euler-Lagrange equations as:
\begin{equation} \label{eq:burgs}
\pa_x z = F(z) \; \pa_y z \ \ \ \ F(z):= \frac{z}{w} \frac{\pa_z P \left(z,w \right)}{ \pa_w P \left(z,w \right) }
\end{equation}
The equations of motion for $ F(z) $ itself is the complex  Burgers' equation.
\begin{align}
\pa_xF = F \; \pa_yF
\end{align}

\subsubsection{Hamiltonian Framework and Holomorphic Factorization}
The Hamiltonian framework for the dimer models is completely parallel to the
one for the 6-vertex Model.
The Hamiltonian is
\[
H=\int_0^L \tau(p, \pa_y h) dy
\]
where $p$ is the momentum (Legendre dual to $\pa_x h$) and
\begin{equation} \label{ham-dim}
\tau(p,t) = \max_{s}\left(  s \; p - \sigma(s,t) \right)
\end{equation}

The equation of motion (\ref{eq:burgs}) is closely related to the Hamilton's equations in the Hamiltonian framework outlined in Section \ref{Ham-6v-section}. Recall that the Legendre transform of the surface tension induces an identification of $ TM $ and $ T^*M $, in particular, it maps $\pa_xh$ to $ p =\pa_1 \sigma(\pa_xh,\pa_y h)$. Via this identification, define
the image of $z$ of equation (\ref{eq:defzw}) in cotangent bundle to the space of height functions as $ z(y) = \exp( p(y) + i \pi \pa_y h(y)) $ and:
\begin{align*}
l(y) = \log(z(y)) =  p(y) + i \pi \pa_y h(y)
\end{align*}
In terms of $ l$, the canonical Poisson bracket can be written:
\begin{equation}\label{Pb}
\begin{aligned}
\{ l(y), l(y') \} &= 2\pi i \delta'(y-y') \\
\{ l(y), \bar{l}(y') \} &= 0 \\
\{ \bar{l}(y), \bar{l}(y') \} &= - 2\pi i \delta'(y-y')
\end{aligned}
\end{equation}
Note that the Poisson algebra of functionals in $l$ has Casimir elements. Its Poisson center is generated
by $\int_0^Ll(y)dy$ and $\int_0^L\bar{l}(y)dy$

The equations of motion for $ l $ and $ \bar{l} $ are:
\begin{equation} \label{eq-mo-hol}
\begin{aligned}
\pa_xl &= F(e^l) \pa_yl  \; \; \; \; 
\pa_x\bar{l} &= F(e^{\bar{l}}) \pa_y\bar{l}
\end{aligned}
\end{equation}
where $ F $ was defined in equation (\ref{eq:burgs}).

Now let us rewrite the Hamiltonian (\ref{ham-dim}) in terms of $ l $ and $ \bar{l} $.
\begin{proposition} The Hamiltonian density satisfies the following differential equations
\begin{equation} \label{eq:burgham}
\frac{\pa^2 \H}{ \pa l^2} = 2\pi i \; F(e^l), \ \ \frac{\pa^2 \H}{ \pa \bar{l}^2} =-2\pi i \; F(e^{\bar{l}}), \ \
\frac{\pa^2 \H}{ \pa l \pa \bar{l}} = 0
\end{equation}
\end{proposition}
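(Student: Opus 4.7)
The plan is to derive the three identities in (\ref{eq:burgham}) by combining the Hamilton equations for the canonical variables $l,\bar l$ with the first--order equations of motion (\ref{eq-mo-hol}). The crucial observation is that the right-hand side of $\pa_x l = F(e^l)\,\pa_y l$ depends only on $l$ and not on $\bar l$; comparing this with the Hamilton form of $\pa_x l$, which a priori involves all second derivatives of $\H$, will force the mixed term $\pa_l\pa_{\bar l}\H$ to vanish and simultaneously determine the pure second derivatives $\pa_l^2\H$ and $\pa_{\bar l}^2\H$.

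First, writing $H=\int_0^L \H(l(y),\bar l(y))\,dy$ and applying the Poisson brackets (\ref{Pb}) together with the identity $\int_0^L f(y')\delta'(y-y')\,dy'=\pa_y f(y)$, a short calculation gives
\begin{align*}
\pa_x l &= \{l,H\} = 2\pi i\,\pa_y(\pa_l\H),\\
\pa_x\bar l &= \{\bar l,H\} = -2\pi i\,\pa_y(\pa_{\bar l}\H),
\end{align*}
where the vanishing of $\{l(y),\bar l(y')\}$ has killed one term in each line. Expanding each right-hand side by the chain rule,
\begin{align*}
\pa_x l &= 2\pi i\,(\H_{ll}\,\pa_y l + \H_{l\bar l}\,\pa_y\bar l),\\
\pa_x\bar l &= -2\pi i\,(\H_{l\bar l}\,\pa_y l + \H_{\bar l\bar l}\,\pa_y\bar l).
\end{align*}

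The last step is to compare these with (\ref{eq-mo-hol}) and read off coefficients of $\pa_y l$ and $\pa_y\bar l$. The off-diagonal matching (absence of $\pa_y\bar l$ in the first EOM and of $\pa_y l$ in the second) forces $\H_{l\bar l}=0$, while the diagonal matchings determine $\pa_l^2\H$ and $\pa_{\bar l}^2\H$ in terms of $F(e^l)$ and $F(e^{\bar l})$, respectively. The main obstacle is justifying that one may legitimately match coefficients of $\pa_y l$ and $\pa_y\bar l$ as if they were independent: pointwise $l$ and $\bar l$ are complex conjugates of real phase--space data $(p,\pa_y h)$, so $\pa_y l$ and $\pa_y\bar l$ are not pointwise independent. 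The resolution is that (\ref{eq-mo-hol}) must hold on every admissible trajectory, and by independently perturbing $(p,\pa_y h)$ at a chosen point one can realize any prescribed pair of complex values for $(\pa_y l,\pa_y\bar l)$. Equivalently, one analytically continues $\H$ to a function of two independent complex arguments, on which the system splits into a holomorphic and an antiholomorphic part; the three identities (\ref{eq:burgham}) are precisely the statement of the resulting factorization $\H(l,\bar l)=\H_+(l)+\H_-(\bar l)$.
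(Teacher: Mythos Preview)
Your argument is correct and follows essentially the same route as the paper: compute $\pa_x l$ and $\pa_x\bar l$ from the Poisson brackets (\ref{Pb}), expand by the chain rule, and then compare with the first--order equations (\ref{eq-mo-hol}) to read off the second partials of $\H$. Your treatment is in fact more careful than the paper's, which simply says ``comparing with the equations of motion'' without discussing why the coefficients of $\pa_y l$ and $\pa_y\bar l$ may be matched independently; your remark that one can freely perturb $(p,\pa_y h)$ to make $(\pa_y l,\pa_y\bar l)$ arbitrary, or equivalently view $\H$ as a function of two independent complex variables, supplies the missing justification.
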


\begin{proof}
Hamiltonian equations of motion with the Poisson brackets (\ref{Pb})
are:
\[
\pa_x l(x,y)=\{ \H, l(y)\}(x,y)=\int_0^L\frac{\pa \H}{\pa \bar{l}}(x,y') \; 2\pi i \delta'(y'-y)dy'=
2 \pi i \left( \frac{\pa^2 \H}{ \pa l^2}\pa_yl+ \frac{\pa^2 \H}{ \pa \bar{l}^2}\pa_y\bar{l} \right)
\]
and a similar equation for $\bar{l}$. Comparing with equations of motion we have
equation (\ref{eq:burgham}).
\end{proof}

\begin{corollary}
This implies
\[
\mathcal{H}(p,\pa_y h)=G(l)+G(\bar{l})+Al+\bar{A}\bar{l}+C
\]
where $G(l)$ is an the antiderivative of $F(l)$. Note that since $\int_0^Ll(y)dy$ and
$\int_0^L\bar{l}(y)dy$ are Casimir functions, the equations of motion for such Hamiltonian
do not depend on $A$ and $C$.
\end{corollary}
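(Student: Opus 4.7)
The plan is essentially to integrate the three PDEs of the preceding Proposition. Treating $l$ and $\bar l$ as independent complex variables in the Wirtinger sense, the mixed partial equation $\partial^2\H/\partial l\,\partial \bar l = 0$ is the statement that $\partial \H/\partial l$ does not depend on $\bar l$. Integrating once in $\bar l$ then once in $l$ (respectively in $l$ then $\bar l$) forces $\H$ to split as
\[
\H(l,\bar l) \;=\; \Phi(l) \;+\; \Psi(\bar l)
\]
for some single-variable functions $\Phi,\Psi$. So the first step eliminates the mixed dependence and reduces a PDE problem to two decoupled ODE problems.

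Substituting this splitting back into the two remaining equations of (\ref{eq:burgham}) yields the ODEs $\Phi''(l)=2\pi i\,F(e^l)$ and $\Psi''(\bar l)=-2\pi i\,F(e^{\bar l})$. Each is a second-order ODE with known right-hand side, so I would integrate twice. Using any antiderivative $G$ of $F$ to build a double antiderivative of $F(e^{(\cdot)})$, one obtains
\[
\Phi(l)=G(l)+A\,l+C_1,\qquad \Psi(\bar l)=G(\bar l)+\bar A\,\bar l+C_2,
\]
with $A,\bar A,C_1,C_2$ the two constants of integration on each side. Adding these recovers the asserted form $\H=G(l)+G(\bar l)+A\,l+\bar A\,\bar l+C$ with $C=C_1+C_2$.

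To justify the parenthetical remark on equations of motion, I would invoke the Poisson-bracket relations (\ref{Pb}): direct inspection of those brackets shows that $\int_0^L l(y)\,dy$ and $\int_0^L \bar l(y)\,dy$ Poisson-commute with $l(y')$ and $\bar l(y')$ for every $y'$, hence with any functional built from them. Consequently the contributions $A\!\int_0^L l\,dy$ and $\bar A\!\int_0^L \bar l\,dy$ to the Hamiltonian $H=\int_0^L\H\,dy$ are Casimirs, and the additive $C L$ is a constant, so none of them alter Hamilton's equations; this is the content of the closing sentence of the corollary.

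There is really no obstacle here beyond bookkeeping: everything is an application of the fundamental theorem of calculus. The one mild subtlety is the choice of antiderivative. The phrase ``$G(l)$ is an antiderivative of $F(l)$'' is best read as selecting, once and for all, a primitive $G$ of the composite $l\mapsto F(e^l)$ (up to a factor of $2\pi i$), so that $G''(l)=2\pi i\,F(e^l)$ on one branch and the complex-conjugate relation holds on the other; the ambiguity in that choice is precisely absorbed into the affine terms $A\,l+\bar A\,\bar l+C$, which is why those free parameters appear in the statement.
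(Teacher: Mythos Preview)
Your proposal is correct and is precisely the natural approach: integrate the three second-order PDEs in the preceding Proposition. The paper does not supply a proof of this Corollary at all---it is asserted as an immediate consequence---so your write-up is more explicit than what appears there.

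One small point worth flagging (which you already noticed): the paper's phrasing ``$G(l)$ is an antiderivative of $F(l)$'' is loose. The equations you actually integrate are $\Phi''(l)=2\pi i\,F(e^l)$ and $\Psi''(\bar l)=-2\pi i\,F(e^{\bar l})$, so strictly speaking one needs a \emph{double} primitive of $l\mapsto F(e^l)$, and the two pieces differ by a sign. In the explicit hexagonal example the paper writes $\mathcal H=\tfrac{1}{2\pi i}\bigl(\mathrm{Li}_2(e^l)-\mathrm{Li}_2(e^{\bar l})\bigr)$, which is $G(l)-G(\bar l)$ rather than $G(l)+G(\bar l)$; reality of $\mathcal H$ is what ties the $\bar l$ piece to the conjugate of the $l$ piece. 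Your closing paragraph already addresses this ambiguity adequately by absorbing the discrepancy into the choice of $G$ and the affine constants $A,\bar A,C$.
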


It is clear that this Hamiltonian Poisson commute with
functionals
\begin{align*}
I_n = \int_0^L l^n(y) dy, \ \ \bar{I}_n = \int_0^L \bar{l}^n(y) dy
\end{align*}
form a commuting family and $I_n$ and $\bar{I}_n$ mutually Poisson commute.

\subsection{The Homogeneous Dimer Model on a Hexagonal Lattice} \label{subsec:hexdimer}

In this case $\phi_\epsilon(\mathcal{L})$ is the hexagonal lattice embedded into
a cylinder with all edge weights being $1$. The existence of these integrals of course is
due to the integrability of the Burgers equation.

\subsubsection{Surface Tension}
The spectral curve for the hexagonal dimer model is:
\begin{align*}
P(z,w) = 1 - z - w
\end{align*}
Three terms correspond to three dimer configurations on the fundamental
domain for a hexagonal lattice with periodic boundary conditions, see Figure \ref{fig:fund-hex-P}.
\begin{figure}[h]
\begin{tikzpicture}[scale=1]

\coordinate (d0) at (.25,-.25);
\coordinate (d1) at (-.25,.25);

\draw [ultra thick] (d0) -- (d1);
\draw (d0) -- ( $ (d0) + (.5,0) $);
\draw (d0) -- ( $ (d0) - (0,.5) $);
\draw (d1) -- ( $ (d1) - (.5,0) $);
\draw (d1) -- ( $ (d1)  + (0,.5) $);

\node at (.5,-.5){\tiny{0}};
\node at (-.25,-.25){\tiny{0}};
\node at (.25,.25){\tiny{0}};
\node at (-.5,.5){\tiny{0}};

\fill (d0) circle (2pt);
\fill[white] (d1) circle (2pt);\draw (d1) circle (2pt);

\end{tikzpicture} \;
\begin{tikzpicture}[scale=1]

\draw (d0) -- (d1);
\draw [ultra thick] (d0) -- ( $ (d0) + (.5,0) $);
\draw (d0) -- ( $ (d0) - (0,.5) $);
\draw [ultra thick] (d1) -- ( $ (d1) - (.5,0) $);
\draw (d1) -- ( $ (d1) + (0,.5) $);

\node at (.5,-.5){\tiny{0}};
\node at (-.25,-.25){\tiny{0}};
\node at (.25,.25){\tiny{1}};
\node at (-.5,.5){\tiny{1}};

\fill (d0) circle (2pt);
\fill[white] (d1) circle (2pt);\draw (d1) circle (2pt);

\end{tikzpicture} \;
\begin{tikzpicture}[scale=1]

\draw (d0) -- (d1);
\draw (d0) -- ( $ (d0) + (.5,0) $);
\draw [ultra thick] (d0) -- ( $ (d0) - (0,.5) $);
\draw (d1) -- ( $ (d1) - (.5,0) $);
\draw [ultra thick] (d1) -- ( $ (d1) + (0,.5) $);

\node at (.5,-.5){\tiny{1}};
\node at (-.25,-.25){\tiny{0}};
\node at (.25,.25){\tiny{1}};
\node at (-.5,.5){\tiny{0}};

\fill (d0) circle (2pt);
\fill[white] (d1) circle (2pt);\draw (d1) circle (2pt);

\end{tikzpicture}
\caption{The three states on $ \Gamma_0 $ with periodic boundary conditions for the hexagonal lattice.}   \label{fig:fund-hex-P}
\end{figure}
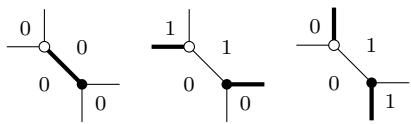 
The surface tension is given in \cite{Ken09} as:
\begin{align*}
\sigma(s,t) &= -\frac{1}{\pi} \Big( L\left( \pi s \right) + L \left( \pi t \right) + L\left( \pi (1 - s - t ) \right) \Big)
\end{align*}
where $ L $ is the Lobavchesky function:
\begin{align*}
L(x) &= - \int_0^x \log \left(2 \sin \left( x \right) \right) dx \\
&= \frac{1}{2}\Im \left( \text{Li}_2( e^{2 i x} ) \right)
\end{align*}
where $ \text{Li}_2(z) = \sum_{k=1} \frac{z^k}{k^2} $ is the dilogarithm.

The partial derivatives are:
\begin{equation}\label{eq:parsigmafv}
\begin{aligned}
\pa_s\sigma( s,t) &=  \log \left( \frac{\sin(\pi s)}{\sin \left( \pi (s+ t) \right) }  \right) \\
\pa_t\sigma(s,t) &= \log \left( \frac{\sin(\pi t)}{\sin \left( \pi ( s+t  ) \right) }  \right)
\end{aligned}
\end{equation}

The Euler-Lagrange equations are:
\begin{align} \label{eq:hexpde}
\pa_x^2h \; \frac{\sin(\pi \pa_yh ) }{ \sin(\pi \pa_xh) } - 2 \; \pa_x\pa_yh \;  \cos \left( \pi ( \pa_y h +  \pa_x h) \right) + \pa_y^2h \; \frac{\sin(\pi \pa_xh)}{\sin(\pi \pa_yh )} =  0
\end{align}

Solving the equation of the spectral curve for $w$, we obtain $F(z)  = z/(1-z)$.
In terms of $l$-variable the Euler-Lagrange  equation become:
\[
\pa_x l=-\frac{e^l}{1-e^l} \; \pa_y l
\]

\subsubsection{Hamiltonian framework} The Legendre transform of the dimer action
for this particular case for the homogeneous hexagonal lattice gives:
\begin{align}\label{hom-hex-ham}
\H = \frac{1}{ 2 \pi i}\; \left(\text{Li}_2( e^{l} ) - \text{Li}_2( e^{\bar{l}}) \right) 
\end{align}
where $l = p + i \pi h_y$ and $ \text{Li}_2(x)=\sum_{k=1}^\infty \frac{z^k}{k^2}$.

Indeed, minimizing $\tau(p, \pa_y h) =  p \; \pa_xh - \sigma( \pa_xh,  \pa_y h )$
in $\pa_xh$ we obtain the equation for the critical point:
\begin{align}
p(\pa_x h,\pa_y h) = \log \left( \frac{\sin(\pi \pa_xh(x,y))}{\sin \left( \pi ( \pa_y h(x,y) + \pa_x h(x,y) \right) }  \right)
\end{align}
which is easy to solve for $\pa_xh(x,y)$:
\begin{align}
\pa_xh(x,y) = \frac{i}{2 \pi} \log \left(  \frac{1 - e^{p(x,y) + i \pi \pa_yh(x,y)}}{1 - e^{p(x,y) - i \pi \pa_yh(x,y)} } \right)
\end{align}
This immediately gives (\ref{hom-hex-ham}). It is straightforward to check that equations (\ref{eq:burgham}) hold.

\section{Free Fermionic point}\label{ff-point}
\subsection{Free fermionic point $\Delta=0$} The six vertex model is related to the dimer model found by replacing each vertex of the six vertex model with the medial graph:
\\
\begin{figure}[h]

\begin{tikzpicture}[scale=1]
\coordinate (d0) at (1,0);
\coordinate (d1) at (0,1);
\coordinate (d2) at (-1,0);
\coordinate (d3) at (0,-1);

\coordinate (d4) at (.5,-.5);
\coordinate (d5) at (-.5,.5);

\coordinate (e0) at (2,0);
\coordinate (e1) at (0,2);
\coordinate (e2) at (-2,0);
\coordinate (e3) at (0,-2);
\node at (.65,.65){\footnotesize{$\beta_1$}};
\node at (.1,.1){\footnotesize{$\gamma$}};
\node at (-.65,-.65){\footnotesize{$\beta_2$}};
\node at (-.4,.9){\footnotesize{$\alpha_1$}};
\node at (-.9,.4){\footnotesize{$\alpha_2$}};
\node at (.4,-.9){\footnotesize{$\alpha_3$}};
\node at (.9,-.4){\footnotesize{$\alpha_4$}};
\draw plot coordinates {(d0)(d1)(d2)(d3)(d0)};
\draw plot coordinates {(d4)(d5)};
\draw [decoration={markings,mark=at position 1 with {\arrow[scale=2]{>}}},
    postaction={decorate}] plot coordinates {(d0)(e0)};
\draw[decoration={markings,mark=at position 1 with {\arrow[scale=2]{>}}},
    postaction={decorate}]  plot coordinates {(d1)(e1)};
\draw plot coordinates {(d2)(e2)};
\draw plot coordinates {(d3)(e3)};

\fill (d0) circle (2pt);
\fill[white] (d1) circle (2pt);\draw (d1) circle (2pt);
\fill (d5) circle (2pt);
\fill[white] (d2) circle (2pt);\draw (d2) circle (2pt);
\fill (d3) circle (2pt);
\fill[white] (d4) circle (2pt);\draw (d4) circle (2pt);

\end{tikzpicture}

\end{figure} 

The correspondence between six vertex states and dimer configurations is as illustrated in Figure \ref{fig:dc} of Appendix \ref{h-f}.

The dimer model edge weights are related to the six vertex weights by:
\begin{align*}
\alpha = \sqrt{b} \hspace{40 pt}
\beta = \frac{c-b}{a} \hspace{40 pt}
\gamma = a
\end{align*}
Parameterize the six vertex weights as:
\begin{align*}
a = \cos(u) \hspace{40 pt}
b = \sin(u) \hspace{40 pt}
c = 1
\end{align*}

The characteristic polynomial, given in (\ref{eq:dimercityp}) simplifies to:
\begin{align*}
P(z,w) = (w z - 1) \cos(u) + (z + w) \sin(u)
\end{align*}

The spectral curve $P(z,w)=0$ gives
\[
w=\frac{z \cos u - \sin u}{ \cos u + z \sin u}
\]
and thus,
\[
F(z)=-\frac{z}{(z\cos u + \sin u)(z\sin u - \cos u)}
\]

\subsubsection{The Surface Tension and Equations of Motion}
The formula for the free energy on the torus derived from the Kasteleyn solution is:
\begin{align*}
f(H,V)  = \frac{1}{ ( 2 \pi i )^2} \int_{|z| = e^{V}} \int_{|w| = e^{H}} \log \big( (w z - 1) \cos(u) + (z + w) \sin(u) \big) \frac{dz}{z} \frac{dw}{w}
\end{align*}
It is straightforward to compute the partial derivatives of the free energy:
\begin{align*}
\partial_H f &= \frac{1}{\pi} \cos^{-1} \left( \frac{\sinh(V-H) \tan(u) - \sinh(V+H) \cot(u) }{2 \cosh(H) } \right) \\
\partial_V f &= \frac{1}{\pi} \cos^{-1} \left( \frac{\sinh(H-V) \tan(u) - \sinh(V+H) \cot(u) }{2 \cosh(V) } \right)
\end{align*}

This defines the map $\RR^2\to \RR^2, \ (H,V)\mapsto (\pa_Hf, \pa_Vf)$. Partial derivatives of $\sigma$
define the inverse map $(s,t)\mapsto (\pa_s \sigma, \pa_t \sigma)$. It has the following explicit form:
\begin{equation}\label{eq:parsigmaff}
\begin{aligned} 
\partial_s \sigma(s,t) &= -\sinh^{-1}\left(\frac{ \sin (\pi t) \cos (\pi s) -\cos (2 u) \cos (\pi  t) \sin( \pi s) }{\sin(2 u) \sin(\pi s) } \right)  \\
\partial_t \sigma(s,t) &= -\sinh^{-1}\left( \frac{  \sin (\pi s) \cos (\pi t) -\cos (2 u) \cos (\pi  t) ) \sin( \pi t)}{ \sin(2 u) \sin(\pi t) } \right)
\end{aligned}
\end{equation}

The Euler-Lagrange equations in the regions where $\pa_xh, \pa_yh\in (0,1) $ are:
\begin{equation}\label{eq:ffpde}
\begin{aligned}
0 = & \frac{\sin(\pi \pa_y h) }{\sin(\pi \pa_xh) } \; \pa^2_xh \\ & - 2 \;  \big(  \cos(\pi \pa_x h) \cos( \pi \pa_y h) + \cos(2 u)  \sin( \pi \pa_x h) \sin(\pi pa_y h) \big) \; \pa_x \pa_yh  \\ &+  \frac{\sin(\pi \pa_xh) }{\sin(\pi \pa_y h)} \; \pa^2_yh 
\end{aligned}
\end{equation}

In the five vertex limit $ u \rightarrow \pi/2 $ (see Appendix \ref{sec:fivevertex} for details), we recover the PDE for the limit shape of the hexagonal dimer model (\ref{eq:hexpde}). Recall the dimer height function differs from the six vertex height function by a linear term; after shifting $ \pa_x h \rightarrow \pa_x h - \frac{1}{2} $ and $ \pa_y h \rightarrow \pa_y h - \frac{1}{2} $, equation (\ref{eq:ffpde}) matches with the PDE derived from Bethe ansatz, see \cite{PR}, \cite{ZJ}.

\subsubsection{The Hamiltonian Framework}
The Hamiltonian for the free fermionic six vertex model is most easily computed from the Bethe ansatz \cite{PR}, \cite{NK}:
\begin{align*}
\mathcal{H} = \frac{1}{2 \pi i} \Big( \big( \text{Li}_2( e^l \tan(u) )  -  \text{Li}_2( -  e^l \cot(u)) \big) -  \big( \text{Li}_2( e^{\bar{l}} \tan(u) )  -  \text{Li}_2( -  e^{\bar{l}} \cot(u)) \big) \Big)
\end{align*}
It is straightforward to check that it satisfies (\ref{eq:burgham}).

We can recover the Hamiltonian (\ref{hom-hex-ham}) of the hexagonal dimer model by first translating the momentum
\begin{align*}
p \rightarrow p + \log( \sin(2 u ) )
\end{align*}
and then taking the 5-vertex limit $ u \rightarrow \frac{\pi}{2} $. The translation corresponds to difference of conjugate momenta $ p = \partial_s \sigma(s,t) $ of the hexagonal dimer model (\ref{eq:parsigmafv}) and free fermionic six vertex model (\ref{eq:parsigmaff}) in the limit $ u \rightarrow \frac{\pi}{2} $.

\section*{Conclusion}
In this paper, we presented infinitely many integrals of motion for the limit shape equations of the 6-vertex model.
We conjecture that it defines an integrable system, but this remains to be proven. A proof of integrability would require a transformation to action-angle variables or a Lax pair presentation.

One should notice the analogy between Conformal Field Theory and the free fermionic limit of the six vertex model. In conformal field
theory (for example in the Gaussian field theory), the integrals of motion factorize into the sum of holomorphic
and antiholomorphic part. When the CFT is perturbed to an integrable QFT \cite{Zam} , the integrals of motion no longer factorize. We observe the same phenomena in the 6-vertex model. When $\Delta=0$
the Hamiltonian factorizes into the holomorphic and antiholomorphic part in terms of Burgers
variables. But, already terms linear in $\Delta$ do not have such structure. This observation goes in
line with the discrete holomorphicity of Dimer models (see for example \cite{Sm} and references therein)
and suggests that such structure is not fully present in the 6-vertex model, which agrees with
\cite{ZJ}.

Our results suggest that the equation describing limit shapes in the 6-vertex model (and in Dimer models)
is an example of an integrable PDE with gradient constraints. In our proof of Poisson commutativity
of the integrals of motion, we assumed that the height function is away from gradient constraints. It is well known that imposing
boundary conditions in integrable PDEs destroy the integrability, if the boundary conditions are not of a very special type. It would be interesting to study which sort of gradient constraints in an integrable PDE destroy integrability and which
ones preserve it.

\appendix

\section{The partial Legendre transform of $\sigma(s,t)$}\label{pTt}
Let $\tau$ be the Legendre transform of $\sigma$ with respect to the first argument:
\begin{align*}
\tau( p, \xi) = \max_{\nu \in \mathbb{R} } \big( p \; \nu - \sigma(\nu, \xi) \big)
\end{align*}
Assuming that $\sigma$ is strictly convex and smooth except, possibly on codimension greater then one
strata, for generic $(p,\xi)$, the maximum is achieved at $ \tilde{\nu}^*(p, \xi) $ defined by:
\begin{align} \label{eq:nustar}
p = \partial_1 \sigma(\nu^*, \xi )
\end{align}
Recall that for a function $f$ in two variables we denote by $\pa_1 f$ and $\pa_2 f$ partial derivatives of $f$
in the first and the second variable respectively.
Then:
\begin{align*}
\tau(p, \xi) = p \; \nu^*(p, \xi ) - \sigma\big( \nu^*(p, \xi), \xi \big)
\end{align*}

A simple but useful general fact about the Legendre transform is:
\begin{lemma}\label{Hess-Lemma} The following identity holds:
\begin{align*}
\frac{\partial_2^2 \tau(p, \xi) }{\partial_1^2 \tau(p, \xi)  } = \partial_1^2 \sigma(\nu^*,\xi) \; \partial_2^2 \sigma(\nu^*,\xi) - \big( \partial_1 \partial_2 \sigma(\nu^*,\xi) \big)^2
\end{align*}
 
\end{lemma}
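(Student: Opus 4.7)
The plan is a direct two-step computation: derive closed-form expressions for $\partial_1^2\tau$ and $\partial_2^2\tau$ in terms of the partials of $\sigma$ evaluated at $(\nu^*,\xi)$, then take the ratio. Only two tools are needed: the envelope theorem applied at the maximizer, and implicit differentiation of the critical-point relation $p=\partial_1\sigma(\nu^*,\xi)$ that defines $\nu^*=\nu^*(p,\xi)$.

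First, I would exploit the envelope cancellation at the first order. Differentiating $\tau(p,\xi)=p\,\nu^*(p,\xi)-\sigma(\nu^*(p,\xi),\xi)$ produces, by the chain rule, several terms proportional to $\partial_p\nu^*$ or $\partial_\xi\nu^*$; each such term carries the prefactor $p-\partial_1\sigma(\nu^*,\xi)$, which is zero by the definition of $\nu^*$. What remains is the clean pair
\[
\partial_1\tau(p,\xi) = \nu^*(p,\xi), \qquad \partial_2\tau(p,\xi) = -\partial_2\sigma(\nu^*,\xi).
\]

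Next, I would differentiate the defining relation $p=\partial_1\sigma(\nu^*(p,\xi),\xi)$ implicitly once in $p$ and once in $\xi$ to obtain
\[
\partial_p\nu^* = \frac{1}{\partial_1^2\sigma(\nu^*,\xi)}, \qquad \partial_\xi\nu^* = -\frac{\partial_1\partial_2\sigma(\nu^*,\xi)}{\partial_1^2\sigma(\nu^*,\xi)}.
\]
Combined with $\partial_1\tau=\nu^*$, the first of these yields $\partial_1^2\tau=1/\partial_1^2\sigma$ at once. For $\partial_2^2\tau$, I differentiate the first-order identity $\partial_2\tau=-\partial_2\sigma(\nu^*,\xi)$ in $\xi$, applying the chain rule through the implicit $\xi$-dependence of $\nu^*$, and substitute the expression for $\partial_\xi\nu^*$ from above. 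Placing the two resulting terms over the common denominator $\partial_1^2\sigma$ produces a closed form for $\partial_2^2\tau$ whose numerator is precisely the Hessian determinant $\partial_1^2\sigma\,\partial_2^2\sigma-(\partial_1\partial_2\sigma)^2$ of $\sigma$.

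Finally, I would form the ratio $\partial_2^2\tau/\partial_1^2\tau$: multiplying $\partial_2^2\tau$ by $1/\partial_1^2\tau=\partial_1^2\sigma$ cancels the denominator cleanly and leaves exactly the right-hand side of the lemma. The entire argument is elementary once the envelope cancellation has been invoked; the only point requiring vigilance, and the one most likely to generate bookkeeping slips rather than a conceptual obstacle, is the careful distinction between the symbol $\partial_i\sigma$, meaning the partial of $\sigma$ in its $i$-th slot followed by evaluation at $(\nu^*,\xi)$, and the total derivatives $\partial_p$ or $\partial_\xi$ of composite functions in which $\nu^*$ itself depends on $(p,\xi)$. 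Once this distinction is tracked faithfully through the two chain-rule applications, the identity drops out in a few lines.
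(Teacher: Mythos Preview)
Your proposal is correct and follows essentially the same route as the paper's own proof: envelope theorem for the first partials of $\tau$, implicit differentiation of $p=\partial_1\sigma(\nu^*,\xi)$ for $\partial_p\nu^*$ and $\partial_\xi\nu^*$, then chain rule and ratio. One caveat worth tracking when you write it out: the numerator you obtain for $\partial_2^2\tau$ is actually $-\bigl(\partial_1^2\sigma\,\partial_2^2\sigma-(\partial_1\partial_2\sigma)^2\bigr)$, so the ratio equals the negative of the Hessian determinant---the paper's own proof in fact ends with this sign, opposite to the sign in the lemma statement.
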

\begin{proof}
We first compute:
\begin{equation*}
\partial_1 \tau = \partial_{p} \left( p \nu^* - \sigma(\nu^*, \xi) \right)
= \nu^* + p \; \partial_p \nu^* - \partial_1 \sigma(\nu^*, \xi) \; \partial_p \nu^*
= \nu^*
\end{equation*}
We used equation (\ref{eq:nustar}). Differentiation this with respect to $ p $ gives $ 1 = \partial_1^2 \sigma(\nu^*, \xi) \; \partial_1 \nu^* $. So:
\begin{align} \label{eq:p11t}
\partial_1^2 \tau = \frac{1}{\partial_1^2 \sigma }
\end{align}
Similarly:
\begin{equation*}
\partial_2 \tau = \partial_\xi  \left( p \nu^*) - \sigma(\nu^*, \xi) \right) \notag
= - \partial_2 \sigma
\end{equation*}
\begin{equation}\label{eq:p22t}
\partial_2^2 \tau = - \partial_2^2 \sigma - \partial_1 \partial_2 \sigma \; \partial_2 \nu^*
\end{equation}
Differentiation (\ref{eq:nustar}) with respect to $ \xi $ gives:
\begin{align*}
\partial_2 \nu^* \; \partial_1^2 \sigma + \partial_1 \partial_ 2 \sigma = 0
\end{align*}
Solving for $ \partial_2 \nu^* $ and substituting into (\ref{eq:p22t}) we obtain:
\begin{align}
\partial_2 \tau = -\partial_2^2 \sigma + \frac{(\partial_1 \partial_ 2 \sigma)^2}{\partial_1^2 \sigma}
\end{align}
Combining with (\ref{eq:p11t}) we arrive to:
\begin{align*}
\frac{\partial_2^2 \tau}{\partial_1^2 \tau} = - \partial_1^2 \sigma \; \partial_2^2 \sigma + ( \partial_1 \partial_2 \sigma)^2
\end{align*}
\end{proof}

\section{Dimer City and Six Vertex Model} \label{h-f}
The six-vertex model at the free- fermion point is equivalent to a dimer model with the dimer city \cite{MW} placed at each vertex of the square lattice:

\begin{figure}[h]

\begin{tikzpicture}[scale=1]
\coordinate (d0) at (1,0);
\coordinate (d1) at (0,1);
\coordinate (d2) at (-1,0);
\coordinate (d3) at (0,-1);

\coordinate (d4) at (.5,-.5);
\coordinate (d5) at (-.5,.5);

\coordinate (e0) at (2,0);
\coordinate (e1) at (0,2);
\coordinate (e2) at (-2,0);
\coordinate (e3) at (0,-2);
\node at (.65,.65){\footnotesize{$\beta_1$}};
\node at (.1,.1){\footnotesize{$\gamma$}};
\node at (-.65,-.65){\footnotesize{$\beta_2$}};
\node at (-.4,.9){\footnotesize{$\alpha_1$}};
\node at (-.9,.4){\footnotesize{$\alpha_2$}};
\node at (.4,-.9){\footnotesize{$\alpha_3$}};
\node at (.9,-.4){\footnotesize{$\alpha_4$}};
\draw plot coordinates {(d0)(d1)(d2)(d3)(d0)};
\draw plot coordinates {(d4)(d5)};
\draw [decoration={markings,mark=at position 1 with {\arrow[scale=2]{>}}},
    postaction={decorate}] plot coordinates {(d0)(e0)};
\draw[decoration={markings,mark=at position 1 with {\arrow[scale=2]{>}}},
    postaction={decorate}]  plot coordinates {(d1)(e1)};
\draw plot coordinates {(d2)(e2)};
\draw plot coordinates {(d3)(e3)};

\fill (d0) circle (2pt);
\fill[white] (d1) circle (2pt);\draw (d1) circle (2pt);
\fill (d5) circle (2pt);
\fill[white] (d2) circle (2pt);\draw (d2) circle (2pt);
\fill (d3) circle (2pt);
\fill[white] (d4) circle (2pt);\draw (d4) circle (2pt);

\end{tikzpicture}

\end{figure} 

The correspondence of dimer configurations and six vertex states is as shown in Figure \ref{fig:dc}.

\input{dimercityheights}

In terms of the dimer weights, the weights of the six-vertex configurations are:
\begin{align*}
w_1 &= \beta_1 \beta_2 \gamma + \beta_2 \alpha_2 \alpha_3 + \beta_1 \alpha_1 \alpha_4 \\
w_2 &=  \gamma \\
w_3 &= \alpha_1 \alpha_3 \\
w_4 &= \alpha_4 \alpha_2 \\
w_5 &= \alpha_2  \alpha_3 + \beta_1 \gamma \\
w_6 &= \alpha_1 \alpha_4 + \beta_2
\end{align*}
It is straightforward to check that for all choice of parameters, $ \Delta = 0 $.

The height function for the dimer model was explained in Section (\ref{dimer-height}). For the dimer city, we have chosen the first $ w_1 $ type vertex as the reference dimer configuration. The height changes are as shown in Figure \ref{fig:dc}.

The characteristic polynomial for the dimer city is computed from dimer configurations satisfying toric boundary conditions. These are the first 6 dimer configurations shown in Figure \ref{fig:dc}, giving:
\begin{align} \label{eq:dimercityp}
P(z,w) = \left(\beta_1 \beta_2 \gamma + \alpha_1 \alpha_4 \beta_1 + \alpha_2 \alpha_3 \beta_2 \right) - (\alpha_1 \alpha_3) w - (\alpha_2 \alpha_4) z^{-1} - \gamma w z^{-1}
\end{align}

Denote by $ \theta^\text{dimer} $ the height function of the dimer model, and by $ \theta^\text{6v} $ the height function for the six vertex model. It is clear that:
\begin{align*}
\theta^\text{dimer} =  \theta^\text{6v} + x/2 + y/2
\end{align*}

\section{$5$-vertex limits of the $6$-vertex model}\label{sec:fivevertex}
The 5-vertex model is the limit of the 6-vertex model with the weight $ w_2 = 0 $. The $ R $-matrix for this model is:
\begin{align*}
R = \begin{pmatrix}
a e^{H+V} & 0 & 0 & 0 \\
0 & b e^{H-V} & c & 0 \\
0 & c & b e^{V-H} & 0 \\
0 & 0 & 0 & 0
\end{pmatrix}
\end{align*}

Other $5$-vertex models are found by setting $ w_3 $, $ w_4 $, or $ w_1 $ each to zero. They are related to model with $ w_1 = 0 $ by the symmetries that interchange thin vertical edges with thick vertical edges, or thin horizontal edges with thick horizontal edges, or both.

The 5-vertex model an be obtained from the 6-vertex model in the limit that the magnetic fields go to infinity as follows:
\begin{enumerate}
\item If $ b^2 - c^2 > 0 $: using the parametrization with $ (a,b,c) = ( \sinh \big( u - \gamma \big), \sinh(u), \sinh(\gamma) )$, change variables to $ (\xi, l, m) $ as:
\begin{align*}
H =  \frac{\gamma}{2} + l \hspace{20pt} V =  \frac{\gamma}{2} + m  \hspace{20pt} u = \gamma + \xi 
\end{align*}
Taking the limit $ \gamma \rightarrow \infty $ the $ R$-matrix converges (up to a constant factor) to:
\begin{align*}
R = \begin{pmatrix}
2  \; \sinh(\xi) e^{l+m}& 0 & 0 & 0 \\
0 & e^{\xi - l + m} & 1 & 0 \\
0 & 1 &  e^{\xi + l-m} & 0 \\
0 & 0 & 0 & 0
\end{pmatrix}
\end{align*}

\item If $ b^2 - c^2 = 0 $: using the parametrization for $ (a,b,c) = ( r \sin( \gamma - u), r \sin(u), r \sin( \gamma) ) $, change variables to:
\begin{align*}
H =  -\frac{1}{2} \ln(\gamma - u) + l \hspace{20pt} V = -\frac{1}{2} \ln(\gamma - u) + m 
\end{align*}
then taking the limit $ \gamma \rightarrow u $, the $ R $-matrix converges to:
\begin{align*}
R = \begin{pmatrix}
 e^{l+m}& 0 & 0 & 0 \\
0 & \sin(u) e^{l-m}& \sin(u) & 0 \\
0 & \sin(u) &  \sin(u) e^{-l + m } & 0 \\
0 & 0 & 0 & 0
\end{pmatrix}
\end{align*}

\item If $ b^2 - c^2 < 0 $, using the parametrization with $ (a,b,c) = ( \sinh( \gamma - u ), \sinh(u), \sinh(\gamma) ) $, change variables to:
\begin{align*}
H =  \frac{\gamma}{2} + l \hspace{20pt} V =  \frac{\gamma}{2} + m  \hspace{20pt} u = \gamma - \xi 
\end{align*}
Taking the limit $ \gamma \rightarrow \infty $ the $ R$-matrix converges (up to a constant factor) to:
\begin{align*}
R = \begin{pmatrix}
2 \sinh(\xi) e^{H+V}& 0 & 0 & 0 \\
0 & e^{- \xi + H-V} & 1 & 0 \\
0 &  1 &  e^{- \xi + V-H} & 0 \\
0 & 0 & 0 & 0
\end{pmatrix}
\end{align*}
\end{enumerate}

The 5-vertex model is equivalent to the dimer model with interactions on the hexagonal lattice \cite{HWKK}. At the free fermion point, $ b^2 - c^2 = 0 $, the five-vertex model is equivalent to the pure dimer model on the hexagonal lattice found by replacing each vertex of the square lattice with the graph:
\begin{figure}[h] \label{fig:hexagonalfivevertex}
\begin{tikzpicture}[scale=1]

\coordinate (d0) at (-.4,-.4);
\coordinate (d1) at ( .4, .4);

\draw (d0) -- (d1);
\draw (d0) -- ( $ (d0) - (.75,0) $);
\draw (d0) -- ( $ (d0) - (0,.75) $);
\draw (d1) -- ( $ (d1) + (.75,0) $);
\draw (d1) -- ( $ (d1)  + (0,.75) $);

\node at (.1,-.1) {\footnotesize{ $ \alpha $ } };
\node at (-1.4,-.5){\footnotesize{ $ \beta$ } };
\node at (1.4,.5){\footnotesize{ $ \beta $ } };
\node at (-.5,-1.4){\footnotesize{ $ \gamma $}};
\node at (.5, 1.4) {\footnotesize{ $ \gamma $ }};

\fill (d0) circle (2pt);
\fill[white] (d1) circle (2pt);\draw (d1) circle (2pt);

\end{tikzpicture}
\end{figure} 

\noindent where the weights are:
\begin{align*}
\alpha &= e^{l+m}\hspace{20pt} \beta = e^\frac{l-m}{2} \hspace{20pt} \gamma = e^\frac{m-l}{2}
\end{align*}

\noindent The correspondence of five vertex states and dimer configurations is as shown in Figure (\ref{fig:fiveVertex}).
\begin{figure}[h] 
\begin{centering}
\begin{tikzpicture}[baseline, scale=1]
\coordinate (d0) at (-.25,-.25);
\coordinate (d1) at (.25,.25);
\draw [ultra thick] (d0) -- (d1);
\draw (d0) -- ( $ (d0) - (.5,0) $);
\draw (d0) -- ( $ (d0) - (0,.5) $);
\draw (d1) -- ( $ (d1) + (.5,0) $);
\draw (d1) -- ( $ (d1) + (0,.5) $);

\fill (d0) circle (2pt);
\fill[white] (d1) circle (2pt);\draw (d1) circle (2pt);

\node at (0,-1){$ \alpha $};
\end{tikzpicture} \; \; \; \begin{tikzpicture}[baseline]
\draw (-.8,0) -- (.8,0);
\draw (0,-.8) -- (0,.8);
\node at (0,-1){$e^{l+m}$};
\end{tikzpicture} 

\begin{tikzpicture}[baseline, scale=1]
\draw (d0) -- (d1);
\draw [ultra thick] (d0) -- ( $ (d0) - (.5,0) $);
\draw (d0) -- ( $ (d0) - (0,.5) $);
\draw [ultra thick] (d1) -- ( $ (d1) + (.5,0) $);
\draw (d1) -- ( $ (d1) + (0,.5) $);

\fill (d0) circle (2pt);
\fill[white] (d1) circle (2pt);\draw (d1) circle (2pt);

\node at (0,-1){$ \beta^2 $};

\end{tikzpicture} \; \; \; \begin{tikzpicture}[baseline]
\draw (-.8,0) -- (.8,0);
\draw (0,-.8) -- (0,.8);
\draw[ultra thick](-.8,0)--(.8,0);
\node at (0,-1){$c \; e^{l-m}$};
\end{tikzpicture}

\begin{tikzpicture}[baseline, scale=1]
\draw (d0) -- (d1);
\draw (d0) -- ( $ (d0) - (.5,0) $);
\draw [ultra thick] (d0) -- ( $ (d0) - (0,.5) $);
\draw (d1) -- ( $ (d1) + (.5,0) $);
\draw [ultra thick] (d1) -- ( $ (d1) + (0,.5) $);

\fill (d0) circle (2pt);
\fill[white] (d1) circle (2pt);\draw (d1) circle (2pt);

\node at (0,-1){$ \gamma^2 $};

\end{tikzpicture} \; \; \;\begin{tikzpicture}[baseline]
\draw (-.8,0) -- (.8,0);
\draw (0,-.8) -- (0,.8);
\draw[ultra thick](0,-.8)--(0,.8);
\node at (0,-1){$c \; e^{m-l}$};
\end{tikzpicture}

\begin{tikzpicture}[baseline, scale=1]

\draw (d0) -- (d1);
\draw (d0) -- ( $ (d0) - (.5,0) $);
\draw  [ultra thick](d0) -- ( $ (d0) - (0,.5) $);
\draw [ultra thick] (d1) -- ( $ (d1) + (.5,0) $);
\draw (d1) -- ( $ (d1) + (0,.5) $);

\fill (d0) circle (2pt);
\fill[white] (d1) circle (2pt);\draw (d1) circle (2pt);

\node at (0,-1){$ \gamma \beta $};

\end{tikzpicture} \; \; \; \begin{tikzpicture}[baseline]
\draw (-.8,0) -- (.8,0);
\draw (0,-.8) -- (0,.8);
\draw[ultra thick](0,-.8)--(0,0)--(.8,0);
\node at (0,-1){$c$};
\end{tikzpicture}

\begin{tikzpicture}[baseline, scale=1]
\draw (d0) -- (d1);
\draw  [ultra thick](d0) -- ( $ (d0) - (.5,0) $);
\draw (d0) -- ( $ (d0) - (0,.5) $);
\draw  (d1) -- ( $ (d1) + (.5,0) $);
\draw [ultra thick](d1) -- ( $ (d1) + (0,.5) $);

\fill (d0) circle (2pt);
\fill[white] (d1) circle (2pt);\draw (d1) circle (2pt);

\node at (0,-1){$ \gamma \beta $};
\end{tikzpicture} \; \; \; \begin{tikzpicture}[baseline]
\draw (-.8,0) -- (.8,0);
\draw (0,-.8) -- (0,.8);
\draw[ultra thick](-.8,0)--(0,0)--(0,.8);
\node at (0,-1){$c$};
\end{tikzpicture}

\end{centering}
\caption{The $ 5 $-vertex configurations and weights, and the corresponding hexagonal dimer model configurations and weights.} \label{fig:fiveVertex}
\end{figure}
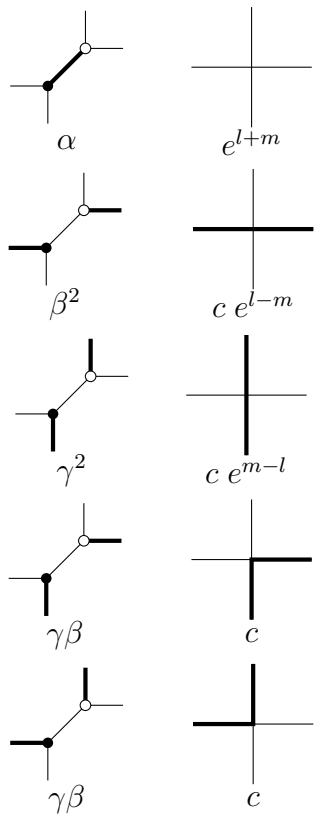 

 It is straightforward to check that as before, the height function of the dimer model and $ 5 $-vertex model are related by $ \theta^\text{dimer} =  \theta^\text{6v} + x/2 + y/2 $.

Thus, the hexagonal dimer model is the limit of the free fermionic 6-vertex model in the limit $  u \rightarrow \gamma $ and $ H, V \rightarrow \infty $. Indeed, by taking the limit $ u \rightarrow 0 $ in the PDE (\ref{eq:ffpde}) of the limit shape of the free fermionic six vertex model, we recover the PDE (\ref{eq:hexpde}) for the hexagonal dimer model.

\section{Dimer Height Functions}
In Section (\ref{dimer-height}), we described the relative height function of two dimer configurations. A more general construction is as follows.

Let $ C_0 $ be a cellular one-chain, satisfying
\begin{align} \label{eq:chaindimer}
\partial C_0 = \sum_{ v \in B } v - \sum_{v \in W } v
\end{align}
where $ B $ and $ W $ are the sets of black and white vertices of $ \Gamma $. Note that in particular, any dimer configuration can be regarded as a one-chain satisfying (\ref{eq:chaindimer}). For any dimer configuration $ \mathcal{D} $, the chain $ \mathcal{D} - C_0 $ satisfies $ \partial (\mathcal{D} - C_0 ) = 0 $, and so the chain $ \mathcal{D} - C_0 $ locally integrates to a function on faces $ \theta_{\mathcal{D}, C_0 } $.

An important class of dimer models are dimers on bipartite 3-valent graphs. The dimer models we considered in this paper in section \ref{subsec:hexdimer} and \ref{ff-point} are both of this type. A different height function, arising from the relation to tiling models, is defined as follows:
\begin{itemize}
\item Orient all edges of the graph from white vertices to back vertices.
\item Using the orientation of the surface define the function $\theta_\mathcal{D} $ on faces
adjacent to an inner edge of $\Gamma$ occupied by a dimer in $ \mathcal{D} $ as it is shown below.
\item The same definition is used to define $\theta_\mathcal{D} $ on faces adjacent to boundary dimer edges.
\end{itemize}
\begin{figure}[h]
\begin{tikzpicture}[scale=1]

\coordinate (d0) at (1,0);
\coordinate (d1) at (-1,0);

\coordinate (t1) at (0.45, 0.779423);
\coordinate (t2) at (0.45, -0.779423);

\draw[thick, decoration={markings,mark=at position 0.6 with {\arrow[scale=2]{>}}}, postaction={decorate}] (d1) -- (d0);
\draw[ultra thick] (d0)--(d1);

\draw (d0) -- ( $(d0) + (t1) $);
\draw (d0) -- ( $(d0) + (t2) $);

\draw (d1) -- ( $(d1) - (t1) $);
\draw (d1) -- ( $(d1) - (t2) $);

\fill (d0) circle (2pt);
\fill[white] (d1) circle (2pt);\draw (d1) circle (2pt);

\node at (-2,0){$\theta_\mathcal{D} + \frac{1}{2}$};
\node at (2,0){$\theta_\mathcal{D} + \frac{1}{2}$};
\node at (0,.6){$\theta_\mathcal{D}+1$};
\node at (0,-.6){$\theta_\mathcal{D}$};

\end{tikzpicture}\label{fig:dimer-h-fncn}
\end{figure} 

In other words, if $ C_0 = \frac{1}{3} \sum_{e \in E(\Gamma) } e $ is the "uniform cover", then
\begin{align*}
\theta_\mathcal{D} = \frac{3}{2} \theta_{\mathcal{D}, \mathcal{C}_0 }
\end{align*}

In this section we prove formulas (\ref{hf-weights}) and (\ref{T}), which give the dimer model weight in terms of the height function.

Given an edge $ e $, oriented from the white vertex to the black vertex, we denote by $ e_L $ and $ e_R $ the faces to the left and right of $ e $.

\begin{lemma} \label{lem:leftright}
Given a dimer configuration $ \mathcal{D} $ with height function $ \theta $, we have:
\begin{align*}
\theta_\mathcal{D}(e_L) - \theta_\mathcal{D}(e_R) +\frac{1}{2} = \begin{cases} 0 &\mbox{if } e \not\in \mathcal{D} \\
\frac{3}{2} & \mbox{if } e \in \mathcal{D} \end{cases}
\end{align*}
\end{lemma}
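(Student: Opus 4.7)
The plan is to verify the two claimed values separately; both cases are direct consequences of the two constructions of $\theta_\mathcal{D}$ set up earlier in this appendix.

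First I would dispatch the case $e\in\mathcal{D}$. By the figure immediately preceding the lemma statement, when $e$ is occupied by a dimer the height function jumps by $+1$ when crossing $e$ from right to left with respect to its white-to-black orientation; that is, $\theta_\mathcal{D}(e_L)-\theta_\mathcal{D}(e_R)=1$. Adding $\tfrac12$ gives $\tfrac32$, matching the claim.

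For the case $e\notin\mathcal{D}$, my plan is to invoke the identity $\theta_\mathcal{D}=\tfrac{3}{2}\,\theta_{\mathcal{D},\mathcal{C}_0}$ with $\mathcal{C}_0=\tfrac13\sum_{e\in E(\Gamma)}e$ recorded earlier in the appendix. Since $\Gamma$ is $3$-valent bipartite, $\partial \mathcal{C}_0=\tfrac{1}{3}(3W-3B\cdot(-1))$ works out so that $\partial(\mathcal{D}-\mathcal{C}_0)=0$, i.e.\ $\mathcal{D}-\mathcal{C}_0$ is a closed $1$-chain. Its Poincar\'e-dual height function $\theta_{\mathcal{D},\mathcal{C}_0}$ on faces has the property that, for any oriented edge $e$, the jump $\theta_{\mathcal{D},\mathcal{C}_0}(e_L)-\theta_{\mathcal{D},\mathcal{C}_0}(e_R)$ equals the coefficient of $e$ in $\mathcal{D}-\mathcal{C}_0$. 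That coefficient equals $-\tfrac13$ when $e\notin\mathcal{D}$ (and, as a sanity check, $\tfrac23$ when $e\in\mathcal{D}$, which after multiplication by $\tfrac32$ reproduces the jump $+1$ verified above). Multiplying by $\tfrac32$ gives $\theta_\mathcal{D}(e_L)-\theta_\mathcal{D}(e_R)=-\tfrac12$, so adding $\tfrac12$ yields $0$, as claimed.

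The only real obstacle is bookkeeping: one must confirm that the sign convention ``left minus right equals coefficient'' used for $\theta_{\mathcal{D},\mathcal{C}_0}$ agrees with the orientation convention chosen in the figure defining $\theta_\mathcal{D}$. This is pinned down by the sanity check above --- both conventions produce jump $+1$ across a dimer edge --- so the result on non-dimer edges follows without further input. As an independent cross-check I would also give a purely local argument: at a white vertex $v$, the three incident edges include exactly one dimer edge (since $\mathcal{D}$ is a perfect matching on an internal trivalent vertex), and the requirement that $\theta_\mathcal{D}$ be a well-defined function on the three faces surrounding $v$ forces the signed jumps across the three edges to sum to zero. Combined with the known jump $+1$ on the dimer edge and the analogous relation at the black endpoints of the two non-dimer edges, one recovers the value $-\tfrac12$ for the jump across a non-dimer edge, which is consistent with the conclusion.
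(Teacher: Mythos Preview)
Your argument is correct, and the $e\in\mathcal{D}$ case is handled exactly as in the paper (read off the jump $+1$ from the defining picture). For $e\notin\mathcal{D}$ the two routes genuinely differ. The paper simply enumerates the four possible local configurations at the two endpoints of $e$ (which of the remaining two edges at the white vertex carries the dimer, and likewise at the black vertex) and reads off $\theta_\mathcal{D}(e_L)-\theta_\mathcal{D}(e_R)=-\tfrac12$ in each case from the picture rule. You instead invoke the identity $\theta_\mathcal{D}=\tfrac32\,\theta_{\mathcal{D},\mathcal{C}_0}$ with the uniform cover $\mathcal{C}_0=\tfrac13\sum_e e$, so that the jump across $e$ is $\tfrac32$ times the coefficient of $e$ in $\mathcal{D}-\mathcal{C}_0$; this gives the answer in one stroke and makes it transparent \emph{why} the jump depends only on whether $e\in\mathcal{D}$, with no case split needed. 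The paper's case-check is more self-contained (it uses only the local picture rule, not the prior identification with $\theta_{\mathcal{D},\mathcal{C}_0}$), while your argument is cleaner and would generalise immediately to other reference chains. One small caveat on your closing ``independent cross-check'': the single-vertex consistency relation only tells you the two non-dimer jumps at that vertex sum to $-1$, not that each equals $-\tfrac12$; to finish that line one really does need to propagate the constraints through neighbouring vertices, so it is best viewed as a consistency check rather than a standalone proof.
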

\begin{proof}
The proof is by checking cases. When $ e \in \mathcal{D} $, the height change is 1 as shown above. When $ e \not \in \mathcal{D}$, there are four cases to consider as shown below.
\begin{figure}[h!]
\begin{tikzpicture}[scale=.65]

\coordinate (d0) at (1,0);
\coordinate (d1) at (-1,0);

\coordinate (t1) at (0.45, 0.779423);
\coordinate (t2) at (0.45, -0.779423);

\draw[ decoration={markings,mark=at position 0.6 with {\arrow[scale=2]{>}}}, postaction={decorate}] (d1) -- (d0);
\draw (d0)--(d1);

\draw [ultra thick] (d0) -- ( $(d0) + (t1) $);
\draw (d0) -- ( $(d0) + (t2) $);

\draw (d1) -- ( $(d1) - (t1) $);
\draw [ultra thick](d1) -- ( $(d1) - (t2) $);

\fill (d0) circle (2pt);
\fill[white] (d1) circle (2pt);\draw (d1) circle (2pt);

\node at (-2,0){\tiny{$ \frac{1}{2}$}};
\node at (2,0){\tiny{$ \frac{1}{2}$}};
\node at (0,.6){\tiny{$ -\frac{1}{2}$}};
\node at (0,-.6){\tiny{$ 0 $}};

\end{tikzpicture} \;
\begin{tikzpicture}[scale=.65]

\coordinate (d0) at (1,0);
\coordinate (d1) at (-1,0);

\coordinate (t1) at (0.45, 0.779423);
\coordinate (t2) at (0.45, -0.779423);

\draw[decoration={markings,mark=at position 0.6 with {\arrow[scale=2]{>}}}, postaction={decorate}] (d1) -- (d0);
\draw (d0)--(d1);

\draw[ultra thick] (d0) -- ( $(d0) + (t1) $);
\draw (d0) -- ( $(d0) + (t2) $);

\draw [ultra thick] (d1) -- ( $(d1) - (t1) $);
\draw (d1) -- ( $(d1) - (t2) $);

\fill (d0) circle (2pt);
\fill[white] (d1) circle (2pt);\draw (d1) circle (2pt);

\node at (-2,0){\tiny{$ - 1$}};
\node at (2,0){\tiny{$ + \frac{1}{2}$}};
\node at (0,.6){\tiny{$  - \frac{1}{2}$}};
\node at (0,-.6){\tiny{$0 $}};
\end{tikzpicture}

\begin{tikzpicture}[scale=.65]

\coordinate (d0) at (1,0);
\coordinate (d1) at (-1,0);

\coordinate (t1) at (0.45, 0.779423);
\coordinate (t2) at (0.45, -0.779423);

\draw[ decoration={markings,mark=at position 0.6 with {\arrow[scale=2]{>}}}, postaction={decorate}] (d1) -- (d0);
\draw (d0)--(d1);

\draw  (d0) -- ( $(d0) + (t1) $);
\draw [ultra thick](d0) -- ( $(d0) + (t2) $);

\draw [ultra thick](d1) -- ( $(d1) - (t1) $);
\draw (d1) -- ( $(d1) - (t2) $);

\fill (d0) circle (2pt);
\fill[white] (d1) circle (2pt);\draw (d1) circle (2pt);

\node at (-2,0){\tiny{$ - \frac{1}{2}$}};
\node at (2,0){\tiny{$ - 1$}};
\node at (0,.6){\tiny{$ -1 $}};
\node at (0,-.6){\tiny{$ 0 $}};

\end{tikzpicture} \;
\begin{tikzpicture}[scale=.65]

\coordinate (d0) at (1,0);
\coordinate (d1) at (-1,0);

\coordinate (t1) at (0.45, 0.779423);
\coordinate (t2) at (0.45, -0.779423);

\draw[decoration={markings,mark=at position 0.6 with {\arrow[scale=2]{>}}}, postaction={decorate}] (d1) -- (d0);
\draw (d0)--(d1);

\draw (d0) -- ( $(d0) + (t1) $);
\draw [ultra thick] (d0) -- ( $(d0) + (t2) $);

\draw  (d1) -- ( $(d1) - (t1) $);
\draw [ultra thick] (d1) -- ( $(d1) - (t2) $);

\fill (d0) circle (2pt);
\fill[white] (d1) circle (2pt);\draw (d1) circle (2pt);

\node at (-2,0){\tiny{ $ \frac{1}{2} $}};
\node at (2,0){\tiny{$ -\frac{3}{2} $}};
\node at (0,.6){\tiny{$- \frac{1}{2}$}};
\node at (0,-.6){\tiny{$0$}};
\end{tikzpicture}
\end{figure}  
\end{proof}
Now we prove:
\begin{lemma} The weight of a dimer configuration is 
\begin{equation}
W(\mathcal{D}) = \prod_{e\in \Gamma} w(e)^{1/3} \; \prod_f q_f^{2 \; \theta_\mathcal{D} (f) / 3}
\end{equation}
where $\theta_\mathcal{D} (f)$ is the value of the height function $\theta_\mathcal{D} $ on the face $f$ and $q_f$ is defined in (\ref{hf-weights}).
\end{lemma}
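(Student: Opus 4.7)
The plan is to start from the tautological identity $W(\mathcal{D}) = \prod_{e \in \Gamma} w(e)^{\mathbf{1}[e \in \mathcal{D}]}$ and use Lemma \ref{lem:leftright} to trade the indicator $\mathbf{1}[e \in \mathcal{D}]$ for a difference of values of the height function on the two faces flanking $e$. Concretely, Lemma \ref{lem:leftright} gives
\begin{equation*}
\theta_\mathcal{D}(e_L) - \theta_\mathcal{D}(e_R) = -\tfrac{1}{2} + \tfrac{3}{2}\,\mathbf{1}[e\in \mathcal{D}],
\end{equation*}
so
\begin{equation*}
\mathbf{1}[e\in \mathcal{D}] = \tfrac{1}{3} + \tfrac{2}{3}\bigl(\theta_\mathcal{D}(e_L) - \theta_\mathcal{D}(e_R)\bigr).
\end{equation*}

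Taking logarithms of $W(\mathcal{D})$ and substituting,
\begin{equation*}
\log W(\mathcal{D}) = \tfrac{1}{3}\sum_{e} \log w(e) + \tfrac{2}{3}\sum_{e} \bigl(\theta_\mathcal{D}(e_L) - \theta_\mathcal{D}(e_R)\bigr)\log w(e).
\end{equation*}
The first term is exactly $\log \prod_e w(e)^{1/3}$. For the second term, reorganize the sum by faces: each face $f$ collects a contribution $\theta_\mathcal{D}(f)\log w(e)$ with a sign depending on whether $f$ lies to the left or right of the oriented edge $e$ (oriented from white to black). With the convention chosen in the paper, this sign is exactly $\varepsilon(e,f)$, because the induced orientation of $\partial f$ agrees with $e$ precisely when $f = e_L$. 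Hence
\begin{equation*}
\sum_{e}\bigl(\theta_\mathcal{D}(e_L) - \theta_\mathcal{D}(e_R)\bigr)\log w(e) = \sum_f \theta_\mathcal{D}(f)\sum_{e\subset \partial f}\varepsilon(e,f)\log w(e) = \sum_f \theta_\mathcal{D}(f)\log q_f.
\end{equation*}
Exponentiating gives the claim.

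The only nontrivial point is the identification of $\varepsilon(e,f)$ with the left/right convention of the edge orientation from white to black. This is a local check at one edge using the orientation of $\Sigma$, and once it is verified in one model case it propagates to the whole graph. A secondary mild issue is the treatment of boundary edges (adjacent to a single interior face and an outer face); here the same identity applies provided we extend $\theta_\mathcal{D}$ to outer faces using the same local rule of Lemma \ref{lem:leftright}, which was included in the definition of the height function earlier. With these conventions fixed, the rest is just the bookkeeping above, and I expect no genuine obstacle beyond this sign check.
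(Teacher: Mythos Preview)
Your proof is correct and follows essentially the same argument as the paper, only in reverse: the paper starts from $\prod_e w(e)^{1/3}\prod_f q_f^{2\theta(f)/3}$, unpacks $q_f$, regroups by edges using $\varepsilon(e,e_L)=1$, $\varepsilon(e,e_R)=-1$, and then applies Lemma~\ref{lem:leftright} to land on $\prod_{e\in\mathcal D}w(e)$; you start from $W(\mathcal D)$, invoke Lemma~\ref{lem:leftright} to rewrite the indicator, and regroup by faces to recover $q_f$. The one nontrivial step you flag---the identification of the left/right convention with the sign $\varepsilon(e,f)$---is exactly the same local check the paper performs.
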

\begin{proof}
The proof is by computation. Using the definition of $q_f$ we have:
\begin{equation*}
\left( \prod_{e\in \Gamma} w(e)^{1/3} \right) \; \prod_f q_f^{2 \; \theta(f) / 3} \ \
= \left( \prod_{e\in \Gamma} w(e)^{1/3} \right) \prod_f \prod_{e\subset \pa f} w(e)^{2 \; \theta(f) \; \varepsilon(e,f) / 3}
\end{equation*}
Now, we use the  definition of $\varepsilon$,  $ \varepsilon( e, e_L ) = 1 $  and $ \varepsilon( e, e_R ) = -1 $:
\begin{align*}
&= \left( \prod_{e\in \Gamma} w(e)^{1/3} \right) \prod_{e} w(e)^{2 \; \left( \theta(e_L) - \theta(e_R)  \right) /3 } \\
&= \prod_{e} w(e)^{2 \; \left( \theta(e_L) - \theta(e_R) + \frac{1}{2}  \right) /3 } = \prod_{e\in \mathcal{D} } w(e)=W(\mathcal{D})
\end{align*}
Here we used Lemma (\ref{lem:leftright}).
\end{proof}

\end{document}